\theoremstyle{remark}
\newtheoremstyle{mytheoremstyle} % name
    {\topsep}                    % Space above
    {\topsep}                    % Space below
    {\upshape}                   % Body font
    {.5em}                           % Indent amount
    {\itshape}                   % Theorem head font
    {.}                          % Punctuation after theorem head
    {.5em}                       % Space after theorem head
    {}  % Theorem head spec (can be left empty, meaning ‘normal’)
\theoremstyle{mytheoremstyle}
\newtheoremstyle{iremark}
  {\topsep}   % ABOVESPACE
  {\topsep}   % BELOWSPACE
  {\upshape}  % BODYFONT
  {0.2in}       % INDENT (empty value is the same as 0pt)
  {\itshape}  % HEADFONT
  {.}         % HEADPUNCT
  {5pt plus 1pt minus 1pt} % HEADSPACE
  {\thmname{#1}\thmnumber{ \itshape#2}\thmnote{ (#3)}} % CUSTOM-HEAD-SPEC
\theoremstyle{iremark}
\newtheorem{proposition}{Proposition}
\newtheorem*{proof}{Proof}
\DeclareFontFamily{U}{mathx}{\hyphenchar\font45}
\DeclareFontShape{U}{mathx}{m}{n}{
	<5> <6> <7> <8> <9> <10>
	<10.95> <12> <14.4> <17.28> <20.74> <24.88>
	mathx10
}{}
\DeclareSymbolFont{mathx}{U}{mathx}{m}{n}
\DeclarePairedDelimiter\abs{\lvert}{\rvert}%
\DeclarePairedDelimiter{\nint}\lfloor\rceil
\renewcommand\d[1]{\ensuremath{%
		\;\mathrm{d}#1\@ifnextchar\d{\!}{}}}
\newcommand*\rel@kern[1]{\kern#1\dimexpr\macc@kerna}
\newcommand*\widebar[1]{%
  \begingroup
  \def\mathaccent##1##2{%
    \rel@kern{0.8}%
    \overline{\rel@kern{-0.8}\macc@nucleus\rel@kern{0.2}}%
    \rel@kern{-0.2}%
  }%
  \macc@depth\@ne
  \let\math@bgroup\@empty \let\math@egroup\macc@set@skewchar
  \mathsurround\z@ \frozen@everymath{\mathgroup\macc@group\relax}%
  \macc@set@skewchar\relax
  \let\mathaccentV\macc@nested@a
  \macc@nested@a\relax111{#1}%
  \endgroup
}
\newcommand{\rev}[1]{\textcolor{black}{#1}} % turn to black color after the 1st revision
\newcommand{\revv}[1]{\textcolor{black}{#1}} % revised parts for main.tex for the 2nd revision
\newcommand{\yy}{\mathbf{y}}
\newcommand{\uu}{\mathbf{u}}
\newcommand{\nn}{\mathbf{n}}
\newcommand{\pp}{\mathbf{p}}
\newcommand{\qq}{\mathbf{q}}
\newcommand{\rr}{\mathbf{r}}
\newcommand{\yklmbar}{ \widebar{\yy}_{k,\ell,m} }
\newcommand{\Imatrix}{{ \boldsymbol{\mathrm{I}} }}
\newcommand{\fb}{\mathbf{f}}
\newcommand{\ekk}[1]{\mathbf{e}_{#1}}
\newcommand{\ekkt}[1]{\mathbf{e}^T_{#1}}
\newcommand{\WW}{\mathbf{W}}
\newcommand{\UU}{\mathbf{U}_{\rmtx}}
\newcommand{\Lambdab}{\mathbf{\Lambda}}
\newcommand{\HH}{\mathbf{H}}
\newcommand{\FF}{\mathbf{F}}
\newcommand{\XXb}{\mathbf{X}}
\newcommand{\SSbig}{\mathbf{S}}
\newcommand{\Pmax}{P_{\rm{max}}}
\newcommand{\XXbstar}{\mathbf{X}^\star}
\newcommand{\projrange}[1]{\boldsymbol{\Pi}_{#1}}
\newcommand{\projnull}[1]{\boldsymbol{\Pi}^{\perp}_{#1}}
\newcommand*{\dt}[1]{%
	\accentset{\mbox{\large .}}{#1}}
\newcommand{\FFh}{\mathbf{F}^H}
\newcommand{\FFsum}{\FF^{\rm{sum}}}
\newcommand{\FFdiff}{\FF^{\rm{diff}}}
\newcommand{\FFopt}{\FF^{\rm{opt}}}
\newcommand{\FFdifftilde}{\widetilde{\FF}^{\rm{diff}}}
\newcommand{\FFanalog}{\FF^{\rm{analog}}}
\newcommand{\FFdigital}{\FF^{\rm{digital}}}
\newcommand{\JJ}{\mathbf{J}}
\newcommand{\TT}{\mathbf{T}}
\newcommand{\JJpos}{\widetilde{\JJ}}
\newcommand{\Frx}{ F_{\rmrx} }
\newcommand{\JJprior}{\mathbf{J}^{\rm{prior}}}
\newcommand{\fpeb}{{\rm{PEB}}}
\newcommand{\atantwo}{{\rm{atan2}}}
\newcommand{\sigmaclk}{ \sigma_{\rm{clk}}^2 }
\newcommand{\sigmaclkk}{ \sigma_{\rm{clk}} }
\newcommand{\JJclk}{ J_{\rm{clk}} }
\newcommand{\tracesmall}[1]{ {{{\rm{tr}}\left( #1 \right)}}  }
\newcommand{\AAb}{\mathbf{A}}
\newcommand{\AAtxthetai}{\AAb_{\rmtx} }
\newcommand{\AAtxthetaiconj}{\AAb^{*}_{\rmtx} }
\newcommand{\AAtxthetaih}{\AAb^T_{\rmtx} }
\newcommand{\AAtxthetadtalli}{\dt{\AAb}_{{\rmtx}} }
\newcommand{\AAtxthetadtallih}{\dt{\AAb}_{{\rmtx}}^T }
\newcommand{\atxdt}{\dt{\mathbf{a}}_{{\rmtx}} }
\newcommand{\atxdttilde}{\widetilde{\dt{\mathbf{a}}}_{{\rmtx}} }
\newcommand{\norm}[1]{\left\lVert#1\right\rVert}
\newcommand{\boldone}{{ {\boldsymbol{1}} }}
\newcommand{\Ptot}{{ P_{\rm{tot}} }}
\newcommand{\etab}{{ \bm{\eta} }}
\newcommand{\rhob}{{ \bm{\rho} }}
\newcommand{\etabtilde}{{ \bm{\widetilde{\eta}} }}
\newcommand{\skl}{s_{k,\ell}}
\newcommand{\nklm}{\nn_{k,\ell,m}}
\newcommand{\rmtx}{{\rm{Tx}}}
\newcommand{\rmrx}{{\rm{Rx}}}
\newcommand{\rmrf}{{\rm{RF}}}
\newcommand{\deltaf}{ \Delta f }
\newcommand{\deltat}{ \Delta t }
\newcommand{\Ntx}{N_\rmtx}
\newcommand{\Nrx}{N_\rmrx}
\newcommand{\Nrxrf}{N_{\rmrf}}
\newcommand{\arx}{\mathbf{a}_\rmrx}
\newcommand{\atx}{\mathbf{a}_\rmtx}
\newcommand{\diag}[1]{ {\rm{diag}}\left(#1\right)  }
\newcommand{\realp}[1]{ \Re \left\{#1\right\}  }
\newcommand{\imp}[1]{ \Im \left\{#1\right\}  }
\newcommand{\thetab}{ \bm{\theta} }
\newcommand{\phib}{ \bm{\phi} }
\newcommand{\taub}{ \bm{\tau} }
\newcommand{\alphabr}{ \bm{\alpha}_{\rm{R}} }
\newcommand{\alphabi}{ \bm{\alpha}_{\rm{I}} }
\newcommand{\fc}{ f_c }
\newcommand{\complexset}[2]{ \mathbb{C}^{#1 \times #2}  }
\newcommand{\realset}[2]{ \mathbb{R}^{#1 \times #2}  }
\newcommand{\thnew}[1]{ {{#1^{\rm{th}}}} }
\newcommand{\mtCN}{{\mathcal{CN}}}
\newcommand{\boldzero}{{ {\boldsymbol{0}} }}
\begin{document}
\bstctlcite{IEEEexample:BSTcontrol}

%%%%%%%%%%%%%%%%%% title page information %%%%%%%%%%%%%%%%%%
\title{Optimal Spatial Signal Design for mmWave Positioning under Imperfect Synchronization}

\author{Musa Furkan Keskin, \textit{Member, IEEE}, Fan Jiang, \textit{Member, IEEE}, Florent Munier, \\Gonzalo Seco-Granados, \textit{Senior Member, IEEE}, Henk Wymeersch, \textit{Senior Member, IEEE}\thanks{Copyright (c) 2015 IEEE. Personal use of this material is permitted. However, permission to use this material for any other purposes must be obtained from the IEEE by sending a request to pubs-permissions@ieee.org.}
\thanks{M. F. Keskin, F. Jiang and H. Wymeersch are with the Department of Electrical Engineering, Chalmers University of Technology, Sweden. G. Seco-Granados is with the Department of Telecommunications and Systems Engineering, Universitat Autonoma de Barcelona, Spain. F. Munier is with Ericsson Research, Sweden. This work is supported, in part, by Vinnova 5GPOS project under grant 2019-03085, MSCA-IF grant
888913 (OTFS-RADCOM), the European Commission through the H2020 project Hexa-X (Grant Agreement no. 101015956) and ICREA Academia Program.}\vspace{-0.2in}}

% make the title area
\maketitle

%\vspace{-5in}

\begin{abstract}
    We consider the problem of spatial signal design for multipath-assisted mmWave positioning under limited prior knowledge on the user's location and clock bias.
    %synchronization errors and location uncertainties. 
    %First, the optimal precoder is shown to lie in a low-dimensional subspace spanned by directional and derivative beams pointing towards the %angle-of-departures of the
    %existing paths. 
    %Under imperfect knowledge of path parameters,
    \rev{We propose an optimal robust design and, based on the low-dimensional precoder structure under perfect prior knowledge, a codebook-based heuristic design with optimized beam power allocation.}
    %We propose an optimal robust design and a codebook-based heuristic design with optimized beam power allocation by exploiting the low-dimensional precoder structure under perfect prior knowledge. 
    Through numerical results, we characterize %three 
    different position-error-bound (PEB) regimes with respect to clock bias uncertainty and show that the proposed low-complexity codebook-based designs outperform the conventional directional beam codebook and achieve near-optimal PEB performance for both analog and digital architectures. 
    
    %verify the effectiveness of the proposed signal design strategies.   %Optimized beamforming design can improve communication and localization performance in mmWave communications. We propose a general approach for spatial signal design under multipath conditions with limited prior knowledge of the user's location. Through a convex formulation, we relate the proposed approach to state-of-the-art heuristic designs. 
    
\end{abstract}
\vspace{-5mm}
\section{Introduction}\label{sec_intro}
%Cellular positioning has traditionally relied only on time-based measurements, in particular time-difference-of-arrival (TDOA) measurements in either uplink or downlink with respect to several synchronized base stations (BSs). To boost performance of time-of-arrival (TOA) measurements at the user equipment (UE) side, dedicated signal designs in time and frequency have been designed. As described in \cite[Section 7.4.1.7]{TS38.211}, so-called comb signals concentrate power in a sparse grid of subcarriers, coordinated by different BSs. This provides a large total bandwidth, orthogonal transmissions, and power boosting for multi-BS TOA estimation. 
In 5G mmWave systems, in addition to time-of-arrival (TOA) measurements, the ability to estimate angles-of-arrival (AOAs) and angles-of-departure (AODs) has been introduced \cite{TR38.855}, which provides additional geometric information, increases multipath resolvability and enables the exploitation of reflected propagation paths to improve positioning\rev{, in contrast to sub-6 GHz systems}. Based on the combination of TOA, AOA and AOD measurements, %single base station (BS) positioning is possible \cite{shahmansoori2017position}. Moreover, even 
joint positioning and synchronization with a single BS can be realized under multipath conditions \cite{Mendrzik_JSTSP_2019}.   
In contrast to positioning signals in time and frequency \cite[Section 7.4.1.7]{TS38.211}, spatial positioning signals are inherently directional and thus mainly meaningful under \textit{a-priori} information about the location of the user equipment (UE), which can be obtained in a \textit{tracking} scenario \cite{mmwave_training_2016,Mendrzik_JSTSP_2019}. Conventional spatial signal design at the BS involves a set of \textit{directional} beams (e.g., from a DFT codebook), where the reported received power at the UE reveals a coarse AOD information \cite{dwivedi2021positioning}.
%is then used to indicate a coarse AOD \cite{dwivedi2021positioning}. 

With the advent of digital and hybrid arrays, more refined spatial designs that boost localization performance become possible %, which raised significant interest from the scientific community, in order to boost localization performance 
\cite{precoderNil2018,successiveLocBF_2019,tasos_precoding2020}. In \cite{precoderNil2018}, the authors adopt the Cram\'{e}r-Rao bound\footnote{The CRB is a meaningful criterion in the presence of a priori location information, as then fine beams can be used with high SNR. } (CRB) as a design criterion and optimize the AOA and AOD accuracy under pure line-of-sight\footnote{Here, LOS refers to the presence of only a geometric direct path between BS and UE, while non-line-of-sight (NLOS) refers to the case with additional propagation paths, where the LOS path may or may not be present.} (LOS) propagation by exploiting prior knowledge on UE location. Similarly, \cite{successiveLocBF_2019} proposes an iterative signal design approach to minimize the CRB on location estimation in the presence of both LOS and NLOS paths, which requires multiple rounds of localization and signal optimization, thus causing significant overhead. The studies in \cite{precoderNil2018,successiveLocBF_2019} share two major limitations%degrading their practicality
, namely, \textit{(i)} the high-complexity of the \textit{unconstrained} design (i.e., without a specific codebook), and \textit{(ii)} the impractical assumption of \textit{perfect synchronization} between BS and UE \cite{Mendrzik_JSTSP_2019}. To address such limitations, \cite{tasos_precoding2020} considers a mmWave localization setup with clock offset between BS and UE, and designs \textit{codebook-based} spatial signals containing only conventional \textit{directional} beams. However, the prior approaches to signal design \cite{precoderNil2018,successiveLocBF_2019,tasos_precoding2020} leave undiscovered whether \textit{(i)} novel codebooks with alternative beams in addition to \textit{directional} ones can be devised to improve accuracy over standard codebooks, \textit{(ii)} the localization performance of \textit{codebook-based} designs can approach that of \textit{unconstrained} ones, and \textit{(iii)} \textit{analog} beams can achieve similar performance to \textit{digital} ones.

With the goal of tackling the above issues, we consider the design of both \textit{unconstrained} and \textit{codebook-based} spatial signals for 5G and beyond 5G multipath-assisted mmWave positioning under limited prior knowledge on UE location \cite{mmwave_training_2016,precoderNil2018,Mendrzik_JSTSP_2019} and \textit{imperfect synchronization} in \textit{analog} and \textit{digital} arrays. Our main contributions are: \textit{(i)} We derive the structure of the optimal spatial signals under perfect prior knowledge on the UE location \cite{li2007range}, establishing an insightful connection to monopulse radar \cite{monopulse_review}; \textit{(ii)} Inspired by the low-dimensional structure of these optimal signals, we propose a novel low-complexity codebook-based design under imperfect prior knowledge, involving both \textit{directional} and \textit{derivative} beams, and an optimal unconstrained design%, where the latter delineates an upper limit on the performance of the former
; \textit{(iii)} We characterize different position-error-bound (PEB) regimes with respect to clock bias uncertainty, providing practical guidelines on which spatial signal to use under different levels of synchronization; \textit{(iv)} We show that the proposed low-complexity codebook can be  implemented in an analog %and digital 
architecture,  %exhibits very similar PEB performances, 
outperforms the codebook with only directional beams \cite{tasos_precoding2020}, and attains the upper bound specified by the unconstrained design.\footnote{In other words, incorporating \textit{derivative} beams into the proposed codebook enables closing the performance gap between the unconstrained design and the codebook-based design.}

\vspace{-3mm}
\section{System Model and Problem Description}\label{sec_sysmodel}
\subsubsection*{Signal Model}
Consider a MIMO-OFDM mmWave downlink communications scenario with a BS and a \rev{single} UE \rev{(as in \cite{precoderNil2018,successiveLocBF_2019,
tasos_precoding2020})}, equipped with $\Ntx$ and $\Nrx$ antennas, respectively. \rev{Having an architecture with the capability of transmitting scaled and phase-shifted versions of signal across the different antennas\footnote{\label{fn_arch}\rev{This architecture is called \textit{digital} throughout the paper for the sake of simplicity (though it is sufficient to employ analog active phased arrays with controllable amplitude per antenna, since only single-stream pilots are considered). In addition, \textit{analog} architectures refer to standard analog passive arrays with no amplitude control per antenna.}},} the BS transmits $M$ identical pilot frames sequentially, where each frame $\SSbig \in \complexset{K}{L}$ consists of $L$ OFDM symbols with $K$ subcarriers, each representing a time-frequency comb pattern  %i.e., positioning reference signal (PRS) in 3GPP 5G New Radio (NR) 
\cite{TR38.855,TS38.211}. Assuming that the $\thnew{m}$ pilot frame is precoded by a  vector $\fb_m \in \complexset{\Ntx}{1}$ for $m = 0, 1, \ldots, M-1$, the received signal vector at the UE over the $\thnew{k}$ subcarrier of the $\thnew{\ell}$ symbol in the $\thnew{m}$ frame can be expressed as \cite{precoderNil2018}
\begin{equation}\label{eq_rec}
    \yy_{k,\ell,m} = \WW^H \HH_k \fb_m \skl + \nklm  ~,
\end{equation}
for $k = 0, \ldots, K-1$, $\ell = 0, \ldots, L-1$, where
$\WW \in \complexset{\Nrx}{\Nrxrf}$ is the analog combining matrix at the UE, with $\Nrxrf$ denoting the number of RF chains, $\skl = \left[ \SSbig \right]_{k, \ell}$ is the pilot symbol on the $\thnew{k}$ subcarrier of the $\thnew{\ell}$ symbol, 
%\rev{with $\mathbb{E}\{\vert \skl\vert^2\}=\Ptot/(KL)$ and $\Ptot$ being the total transmit power for positioning},
$\HH_k \in \complexset{\Nrx}{\Ntx}$ is the channel matrix at the $\thnew{k}$ subcarrier, given by 
%$ \HH_k = \sum_{p=0}^{P-1} \alpha_p \, e^{-j 2 \pi k \deltaf \tau_p } \arx(\phi_p) \atx^T(\theta_p)$, 
\begin{equation} \label{eq_hk}
       \HH_k = \sum_{\revv{g}=0}^{\revv{G} -1} \alpha_{\revv{g}} \, e^{-j 2 \pi k \deltaf \tau_{\revv{g}} } \arx(\phi_{\revv{g}}) \atx^T(\theta_{\revv{g}})~,  
\end{equation}
and $\nklm \sim \mtCN(\boldzero, \sigma^2 \Imatrix )$ is additive white Gaussian noise. Here, $\deltaf$ is the subcarrier spacing, \revv{$G$} denotes the number of paths, $\alpha_{\revv{g}}$, $\tau_{\revv{g}}$, $\phi_{\revv{g}}$ and $\theta_{\revv{g}}$ are the complex channel gain, delay, AOA and AOD of the $\thnew{\revv{g}}$ path, respectively, and $\atx(\theta)  \in \complexset{\Ntx}{1}$ and $\arx(\phi) \in \complexset{\Nrx}{1}$ denote the array steering vectors at the BS and UE, respectively.

\subsubsection*{System Geometry}
We consider a 2D positioning setup \rev{(see Fig.\,\ref{fig_sys_geometry})} where the BS and UE are located at $\qq = \left[ q_x, q_y \right]^T \in \realset{2}{1}$ and $\pp = \left[ p_x, p_y \right]^T \in \realset{2}{1}$, respectively, and the orientation of the UE is denoted by $\psi \in [0, 2 \pi)$. We assume that $\qq$ is known, while $\pp$ and $\psi$ are unknown to be estimated from \eqref{eq_rec}, as in \cite{shahmansoori2017position,kakkavas2019performance}. For the channel model, ${\revv{g}}=0$ represents the LOS path, while ${\revv{g}} \geq 1$ correspond to the NLOS paths. Each NLOS path is associated with \revv{an incidence point} with unknown location $\rr_{\revv{g}} = \left[ r_{{\revv{g}},x}, r_{{\revv{g}},y} \right]^T \in \realset{2}{1}$. According to the system geometry, the angles can be expressed as 
\begin{align}
    \theta_{{\revv{g}}>0} & =\atantwo( r_{{\revv{g}},y} - q_y, r_{{\revv{g}},x} - q_x) ~,\\
    \theta_0& =\atantwo( p_y - q_y, p_x - q_x ) ~,\\
    \phi_{{\revv{g}}>0}& =\atantwo( p_y - r_{{\revv{g}},y}, p_x - r_{{\revv{g}},x} ) - \psi ~,\\
    \phi_0& =\theta_0 - \psi,
\end{align}
%$\theta_{p>0}=\atantwo( r_{p,y} - q_y, r_{p,x} - q_x )$, $\theta_0=\atantwo( p_y - q_y, p_x - q_x )$, $\phi_{p>0}=\atantwo( p_y - r_{p,y}, p_x - r_{p,x} ) - \psi$, and $\phi_0=\theta_0 - \psi$, 
%\begin{align}
%    \theta_p &=
%    \begin{cases}
%	\atantwo( p_y - q_y, p_x - q_x ),&~~ p = 0 \\
%	\atantwo( r_{p,y} - q_y, r_{p,x} - q_x ) ,&~~ p \geq 1 
%	\end{cases} ~, \\ \label{eq_aoa}
%	\phi_p &=
 %   \begin{cases}
%	\theta_p - \psi ,&~~ p = 0 \\
%	\atantwo( p_y - r_{p,y}, p_x - r_{p,x} ) - \psi ,&~~ p \geq 1 
%	\end{cases} ~,
%\end{align}
with $\atantwo(y,x)$ denoting the four-quadrant inverse tangent. The path delays are given by
\begin{align}
    \tau_{{\revv{g}}>0}& =(\norm{ \qq - \rr_{\revv{g}} }_2 + \norm{ \rr_{\revv{g}} - \pp }_2) /c + \deltat ~,\\
    \tau_{0}& =\norm{ \pp - \qq }_2/c + \deltat,
\end{align}
%$\tau_{p>0}=(\norm{ \qq - \rr_p }_2 + \norm{ \rr_p - \pp }_2) /c + \deltat$, $\tau_{0}=\norm{ \pp - \qq }_2/c + \deltat$, 
%\begin{align}\label{eq_tau}
 %   \tau_p &=
  %  \begin{cases}
%	\norm{ \pp - \qq }_2/c + \deltat, &~~ p = 0 \\
%	(\norm{ \qq - \rr_p }_2 + \norm{ \rr_p - \pp }_2) /c + \deltat,&~~ p \geq 1 
%	\end{cases} ~,
%\end{align}
where $c$ is the speed of propagation and $\deltat$ is the clock bias between the BS and UE, modeled as $\deltat\sim p(\deltat)$.

\begin{figure}
%\begin{center}
\centering
    \vspace{-0.8in}
	\includegraphics[width=1.05\linewidth]{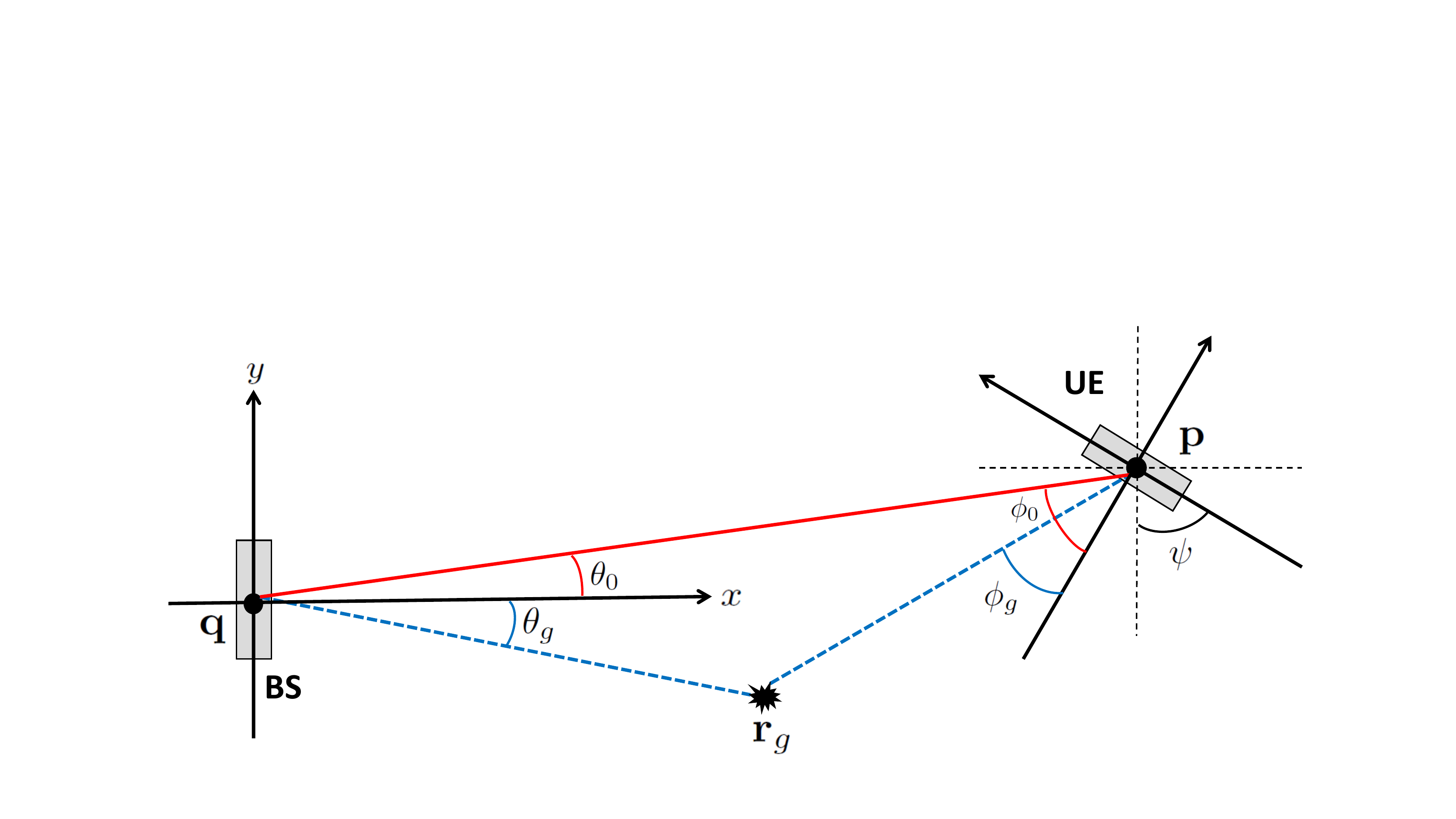}
	\vspace{-0.4in}
	\caption{\rev{2D geometric model of mmWave MIMO positioning setup with known BS position $\qq$, unknown UE location $\pp$ and orientation $\psi$, and unknown \revv{incidence point} location $\rr_{\revv{g}}$}.}
	\label{fig_sys_geometry}
	\vspace{-0.25in}
	%\end{center}
\end{figure}

\subsubsection*{Problem Description for Spatial Signal Design}
Our problem of interest for mmWave spatial signal design is to find the optimal (analog, hybrid, or digital) precoder $\FF  \triangleq \left[ \fb_0 \, \ldots \, \fb_{M-1} \right] \in \complexset{\Ntx}{M}$ that maximizes the accuracy of estimation of UE location $\pp$ in a setup where also the orientation $\psi$, the clock bias $\deltat$ and the \revv{incidence point} locations $\{ \rr_{\revv{g}} \}_{\revv{g}=1}^{\revv{G}-1}$ have to be estimated. The input data for the estimator are the observations $\{ \yy_{k,\ell,m} \}_{\forall k,\ell,m}$ in \eqref{eq_rec}, collected over $M$ beams each employing $L$ symbols with $K$ subcarriers, together with the prior knowledge\footnote{\label{fn_prior}The prior information can be obtained via joint tracking of UE and \revv{incidence point} positions and clock bias  \cite{Mendrzik_JSTSP_2019,precoderNil2018}. \rev{The prior knowledge on $\pp$ and $\{ \rr_{\revv{g}} \}_{{\revv{g}}=1}^{{\revv{G}}-1}$ will be described in Sec.~\ref{sec_robust}, while the effect of $\deltat$ on PEB will be introduced in Sec.~\ref{sec_crb_metric}.}} on $\pp$, $\{ \rr_{\revv{g}} \}_{{\revv{g}}=1}^{{\revv{G}}-1}$ and $\deltat$. \rev{Since the BS has no control over the UE, $\WW$ is assumed to be fixed and not subject to optimization.}

% In this setup, the aim is to estimate the position $\pp$, orientation $\psi$ and clock bias $\deltat$ of the UE from the observations $\{ \yy_{k,\ell,m} \}_{\forall k,\ell,m}$ in \eqref{eq_rec}, collected over $M$ beams each employing $L$ symbols with $K$ subcarriers. In the presence of imperfect BS-UE synchronization (i.e., unknown clock bias $\deltat$) and uncertainties regarding the UE location $\pp$, orientation $\psi$ and the scatterer locations $\{ \rr_p \}_{p=1}^{P-1}$, our problem of interest for mmWave spatial signal design is to find the optimal precoder $\FF  \triangleq \left[ \fb_0 \, \ldots \, \fb_{M-1} \right] \in \complexset{\Ntx}{M}$ that maximizes the accuracy of UE position estimation. 
%In the next section, we derive CRB based positioning performance metrics for precoder optimization.
\vspace{-2mm}
\section{Spatial Signal Design for mmWave Positioning}\label{sec_perf_met}
In this section, we derive the CRB based performance metric for position estimation, and propose unconstrained and codebook-based strategies to design spatial signals.

\vspace{-2mm}
\subsection{CRB-Based Performance Metric}\label{sec_crb_metric}
The channel and location domain unknown parameter vectors are given by $\etab = \left[ \thetab^T, \phib^T,  \alphabr^T,  \alphabi^T,  \taub^T  \right]^T \in \realset{5\revv{G}}{1}$ and $\etabtilde = [   \pp^T, \psi, \rr^T, \alphabr^T, \alphabi^T, \deltat  ]^T \in \realset{(4\revv{G}+2)}{1}$, respectively,
%\begin{equation}\label{eq_etab}
 %   \etab = \left[ \thetab^T, \phib^T,  \alphabr^T,  \alphabi^T,  \taub^T  \right]^T \in \realset{5P}{1}
%\end{equation}
where $\thetab = \left[ \theta_0, \ldots, \theta_{\revv{G}-1} \right]^T$, $\phib = \left[ \phi_0, \ldots, \phi_{\revv{G}-1} \right]^T$, $\alphabr = \left[ \realp{\alpha_0}, \ldots, \realp{\alpha_{\revv{G}-1}} \right]^T$, $\alphabi = \left[ \imp{\alpha_0}, \ldots, \imp{\alpha_{\revv{G}-1}} \right]^T$, $\taub = \left[\tau_0, \ldots, \tau_{\revv{G}-1} \right]^T$ and $\rr = [ \rr_1^T, \ldots, \rr_{\revv{G}-1}^T ]^T$. The Fisher information matrix (FIM) $\JJ \in \realset{5\revv{G}}{5\revv{G}}$ of $\etab$ can be computed from the signal model in \eqref{eq_rec} \rev{using the Slepian-Bangs formula \cite[Eq.~(15.52)]{kay2013fundamentals}} as
%\cite{stoica2005spectral}
\begin{equation}\label{eq_jobs_ij}
    \left[ \JJ \right]_{i,j} = \frac{2}{\sigma^2}  \sum_{k=0}^{K-1} \sum_{\ell=0}^{L-1} \sum_{m=0}^{M-1} \realp{ \frac{\partial \yklmbar^{H}}{\partial \eta_i} \frac{\partial \yklmbar}{\partial \eta_j} }  
\end{equation}
for $i, j = 0, \ldots, 5\revv{G}-1$, where $\yklmbar = \WW^H   \HH_k \fb_m \skl$.

Since our focus is on the positioning performance, we derive the FIM $\JJpos \in \realset{(4\revv{G}+2)}{(4\revv{G}+2)}$ of $\etabtilde$ as 
\begin{align}\label{eq_jjpos}
    \JJpos = \TT^T \JJ \TT + \JJprior ~,
\end{align}
where $\TT^T \JJ \TT$ and $\JJprior$ represent the FIMs derived from the observations and the prior knowledge, respectively, and the transformation matrix $\TT \in \realset{5\revv{G}}{(4\revv{G}+2)}$ can be expressed as the Jacobian $[ \TT ]_{i,j} = \partial [ \etab ]_i / \partial [ \etabtilde ]_j$.
%\begin{equation}\label{eq_param_loc}
 %   \etabtilde = \left[   \pp^T, \psi, \rr^T, \alphabr^T, \alphabi^T, \deltat  \right]^T \in \realset{(4P+2)}{1}~,
%\end{equation}
Since only the clock bias is assumed to be random in $\etabtilde$, we have  $\left[ \JJprior \right]_{(4\revv{G}+2,4\revv{G}+2)}=\JJclk \ge 0$,  derived from $p(\deltat)$, 
%\begin{align}
 %   \left[ \JJprior \right]_{(4P+2,4P+2)} = \Eee\left\{ - \frac{ \partial^2 \ln{ p(\deltat) }  }{ \partial \deltat^2 } \right\} = \frac{1}{\sigmaclk}  ~,
%\end{align}
and all other entries of $\JJprior$ are zero \cite{kakkavas2019performance}. To quantify the position estimation accuracy, we adopt the PEB as our performance metric, which can be computed via \cite{shahmansoori2017position,kakkavas2019performance}
\begin{align}\label{eq_peb}
    \fpeb(\FF; \etabtilde) = \left[ \tracesmall{ [ \JJpos^{-1} ]_{0:1,0:1} }  \right]^{1/2} ~.
\end{align} 
%\begin{align}\label{eq_\revv{g}eb}
 %   \fpeb(\FF; \etabtilde) = \sqrt{ \tracebig{ \big[ \JJpos^{-1} \big]_{0:1,0:1} }  }
%\end{align}
\rev{As seen from Appendix~\ref{app_fim_der}, the FIM $\JJpos$ and the PEB in \eqref{eq_peb} are functions of the precoder $\FF$ employed by the BS.}
%as a function of the precoder $\FF$ employed by the BS. \rev{It can be shown that each entry in the FIM depends on $\FF$ via $\FF \FF^H$.} \revr{Is there a reference? Any condition on the pilots?}
%Note that the PEB in \eqref{eq_peb} depends on the deterministic parameters in $\etabtilde$.

\vspace{-2mm}
\subsection{Optimal Signal Design with  Perfect Knowledge}\label{sec_opt_sig_perfect}
%Based on \eqref{eq_hk}--\eqref{eq_tau}, \eqref{eq_jobs_ij} and \eqref{eq_jjpos}, 
The PEB %metric in \eqref{eq_peb} 
depends on the deterministic unknown parameters in $\etabtilde$ (i.e., $\pp$, $\psi$, $\rr$, $\alphabr$ and $\alphabi$). In this part, we assume perfect knowledge of these parameters and formulate the signal design problem accordingly. In Sec.~\ref{sec_robust}, we will focus on robust signal design in the presence of uncertainties
%\footnote{Hereafter, uncertainties in $\etabtilde$ refer to those in the deterministic parameters.} 
in $\etabtilde$. Under perfect knowledge of $\etabtilde$, the signal design problem with a total power constraint \revv{$\Ptot$} can be formulated as
\begin{align} \label{eq_problem_peb}
	\mathop{\mathrm{min}}\limits_{\FF} &~~
	\fpeb(\FF; \etabtilde) ~~
	\mathrm{s.t.} ~~ \tracesmall{\FF \FF^H} = \revv{\Ptot/(KL)} ~.
\end{align}
Using change of variables $\XXb = \revv{L} \FF \FF^H$, \eqref{eq_problem_peb} can be relaxed to (by removing the constraint ${\rm{rank}}(\XXb) = M$) \cite[Ch.~7.5.2]{boyd2004convex}
\begin{align} \label{eq_problem_peb2}
	\mathop{\mathrm{min}}\limits_{\XXb, \uu} &~~
	\boldone^T \uu 
	\\ \nonumber % \label{eq_problem_const}
    \mathrm{s.t.}&~~\begin{bmatrix}  \JJpos\rev{(\XXb)} & \ekk{b} \\ \ekkt{b} & u_b  \end{bmatrix} \succeq 0, ~ b=0,1, %\\ \nonumber
	%&
	~ \tracesmall{\XXb} = \revv{\frac{\Ptot}{K}},  \XXb \succeq 0 ~,
\end{align}
where $\uu = \left[u_0, u_1 \right]^T$ is a newly introduced auxiliary variable and $\ekk{b}$ is the $\thnew{b}$ column of the identity matrix. It can be observed from \rev{Appendix~\ref{app_fim_der}} and \eqref{eq_jjpos} that $\JJpos\rev{(\XXb)}$ is a linear function of $\XXb$ \cite{li2007range,precoderNil2018}, which implies that \eqref{eq_problem_peb2} is a convex semidefinite program (SDP) \cite[Ex.~3.26]{boyd2004convex} and thus can be solved via standard convex optimization tools.\footnote{Using the covariance matrix $\XXb \in \complexset{\Ntx}{\Ntx}$ obtained from the relaxed problem in \eqref{eq_problem_peb2}, an approximate precoder $\FF \in \complexset{\Ntx}{M}$ can be recovered by using randomization procedures \cite{luo2010semidefinite}.} %simply via $\FFstar = \QQ_1 \Gammab_1^{1/2}$, where $\XXbstar = \QQ_1 \Gammab_1 \QQ_1^H + \QQ_2 \Gammab_2 \QQ_2^H$ is the eigenvalue decomposition of $\XXbstar$. Here, the diagonal matrices $\Gammab_1$ and $\Gammab_2$ have the $M$ largest eigenvalues and the remaining $\Ntx - M$ eigenvalues, respectively, and $\QQ_1$ and $\QQ_2$ contain the corresponding eigenvectors on their columns.} \cite{boyd2004convex}.
The following proposition provides the structure of the precoder covariance matrix \rev{for the relaxed problem in \eqref{eq_problem_peb2}}.\vspace{-0.05in}

\begin{proposition}\label{prop_opt_prec}
    The precoder covariance matrix $\XXbstar$ obtained as the solution to \rev{the relaxed problem in} \eqref{eq_problem_peb2} can be expressed as
    \begin{equation}
        \XXbstar = \UU \Lambdab \UU^H ~,
    \end{equation}
    where $\Lambdab \in \complexset{2\revv{G}}{2\revv{G}}$ is a positive semidefinite (PSD) matrix and $ \UU \triangleq [ \AAtxthetai ~ \AAtxthetadtalli  ]^{*}$, $\AAtxthetai \triangleq [ \atx(\theta_0) \, \ldots \,  \atx(\theta_{\revv{G}-1}) ]$, $\AAtxthetadtalli \triangleq [ \atxdt(\theta_0) \, \ldots \,  \atxdt(\theta_{\revv{G}-1}) ]$, 
    %\begin{align}
     %   \UU &\triangleq \left[ \AAtxthetai ~ \AAtxthetadtalli  \right] ~, \\
      %  \AAtxthetai &\triangleq \left[ \atx(\theta_0) \, \ldots \,  \atx(\theta_{P-1}) \right] ~, \\
       % \AAtxthetadtalli &\triangleq \left[ \atxdt(\theta_0) \, \ldots \,  \atxdt(\theta_{P-1}) \right] ~,
    %\end{align}
    with $\atxdt(\theta) \triangleq {\partial \atx(\theta)}/{\partial \theta}$.
\end{proposition}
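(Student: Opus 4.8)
The plan is to show that the FIM $\JJpos$ in \eqref{eq_jjpos}, and hence both the objective and the full feasible set of the relaxed SDP \eqref{eq_problem_peb2}, depend on the precoder covariance $\XXb$ only through the compressed $2\revv{G}\times 2\revv{G}$ Hermitian matrix $\UU^H\XXb\UU$, and then to argue that whatever part of an optimal $\XXbstar$ lies outside $\mathrm{range}(\UU)$ is informationally useless and can be projected away (the power it carries is then reallocated within $\mathrm{range}(\UU)$, which only helps). For the first part I would expand $\yklmbar=\WW^H\HH_k\fb_m\skl$ via \eqref{eq_hk} as $\WW^H\HH_k=\sum_{g}\alpha_g\,e^{-j2\pi k\deltaf\tau_g}\big(\WW^H\arx(\phi_g)\big)\atx^T(\theta_g)$ and differentiate with respect to each entry of $\etab=[\thetab^T,\phib^T,\alphabr^T,\alphabi^T,\taub^T]^T$. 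The key bookkeeping observation is that the derivative with respect to $\theta_g$ replaces the factor $\atx^T(\theta_g)$ by $\atxdt^T(\theta_g)$, whereas the derivatives with respect to $\phi_g$, $\realp{\alpha_g}$, $\imp{\alpha_g}$ and $\tau_g$ act only on the $\FF$-independent quantities $\alpha_g$, $e^{-j2\pi k\deltaf\tau_g}$ and $\WW^H\arx(\phi_g)$ and leave $\atx^T(\theta_g)$ untouched. Thus every derivative is rank one in the precoder variable, $\partial\yklmbar/\partial\eta_i=\skl\,\beta_{i,k}\,(\mathbf{g}_i^{T}\fb_m)\,\mathbf{c}_i$, with $\beta_{i,k}$ and $\mathbf{c}_i$ independent of $\FF$ and $\mathbf{g}_i\in\{\atx(\theta_g),\atxdt(\theta_g)\}_{g=0}^{\revv{G}-1}$. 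Substituting into the Slepian--Bangs formula \eqref{eq_jobs_ij} and summing over $m$ (using $\sum_m\fb_m\fb_m^H=\XXb/\revv{L}$), every entry of $\JJ$ becomes a fixed $\FF$-independent linear functional of the scalars $\mathbf{g}_j^{T}\XXb\,\mathbf{g}_i^{*}=(\mathbf{g}_j^{*})^{H}\XXb\,\mathbf{g}_i^{*}$; since the vectors $\mathbf{g}_i^{*}$ are precisely the columns of $\UU=[\AAtxthetai~\AAtxthetadtalli]^{*}$, these scalars are the entries of $\tfrac1{\revv{L}}\UU^H\XXb\UU$. Because $\TT$ and $\JJprior$ do not depend on $\XXb$, \eqref{eq_jjpos} then shows both that $\JJpos$ is a function of $\UU^H\XXb\UU$ alone and that $\TT^T\JJ(\XXb)\TT$ is linear and homogeneous in $\XXb$ (this is the linearity noted after \eqref{eq_problem_peb2}, made explicit; cf.\ Appendix~\ref{app_fim_der} and \cite{li2007range,precoderNil2018}).

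Now let $(\XXbstar,\uu^\star)$ be an optimal pair for \eqref{eq_problem_peb2} and let $\projrange{\UU}$ be the orthogonal projector onto $\mathrm{range}(\UU)$. Put $\widehat{\XXb}\triangleq\projrange{\UU}\XXbstar\projrange{\UU}$. Since $\projrange{\UU}\UU=\UU$, we have $\UU^H\widehat{\XXb}\UU=\UU^H\XXbstar\UU$, hence $\JJpos(\widehat{\XXb})=\JJpos(\XXbstar)$, so the two block-matrix (LMI) constraints still hold with $\uu^\star$; moreover $\widehat{\XXb}\succeq0$ and $\tracesmall{\widehat{\XXb}}=\tracesmall{\projrange{\UU}\XXbstar}\le\tracesmall{\XXbstar}=\Ptot/K$. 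To restore the power equality, rescale $\widetilde{\XXb}\triangleq c\,\widehat{\XXb}$ with $c=(\Ptot/K)/\tracesmall{\widehat{\XXb}}\ge1$ (the degenerate case $\widehat{\XXb}=\boldzero$, in which $\XXbstar$ carries no observation information and the PEB is infinite, is excluded). By the homogeneity of $\TT^T\JJ(\cdot)\TT$, $\JJpos(\widetilde{\XXb})=c\,\TT^T\JJ(\widehat{\XXb})\TT+\JJprior\succeq\TT^T\JJ(\widehat{\XXb})\TT+\JJprior=\JJpos(\XXbstar)$; hence, keeping $\uu^\star$, each block-matrix constraint in \eqref{eq_problem_peb2} stays positive semidefinite (a PSD term was added to its leading block), $\widetilde{\XXb}\succeq0$, and $\tracesmall{\widetilde{\XXb}}=\Ptot/K$, so $(\widetilde{\XXb},\uu^\star)$ is feasible with objective value $\boldone^T\uu^\star$ and is therefore also optimal. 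Finally $\mathrm{range}(\widetilde{\XXb})\subseteq\mathrm{range}(\UU)$, so from the eigendecomposition $\widetilde{\XXb}=\sum_i\lambda_i\,\mathbf{v}_i\mathbf{v}_i^H$ with $\lambda_i\ge0$ and $\mathbf{v}_i=\UU\mathbf{c}_i$ we obtain $\widetilde{\XXb}=\UU\Lambdab\UU^H$ with $\Lambdab=\sum_i\lambda_i\,\mathbf{c}_i\mathbf{c}_i^H\succeq0$ of size $2\revv{G}\times2\revv{G}$; relabelling $\widetilde{\XXb}$ as $\XXbstar$ proves the claim.

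I expect the main obstacle to be the middle step --- promoting the projected matrix $\widehat{\XXb}$ to a genuinely optimal feasible point at full power $\Ptot/K$. It couples two facts that must each be pinned down: (i) the exact factorization of $\JJpos$ through $\UU^H\XXb\UU$, which itself rests on the derivative bookkeeping of the first step (only the AOD derivative produces $\atxdt(\theta_g)$); and (ii) the monotonicity of the PEB under a PSD increase of $\JJpos$, used to absorb the power freed by the projection, which relies on $\TT^T\JJ(\XXb)\TT$ being linear and homogeneous in $\XXb$. The rest --- propagating the $\WW$, $\alpha_g$ and delay factors through the derivatives, and checking that the block-matrix constraints survive the PSD increase --- is routine.
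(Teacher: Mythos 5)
Your argument is correct and follows essentially the same route as the paper's own proof: both rest on the observation that the FIM depends on $\XXb$ only through the Gram-type quantities $\AAtxthetaih \XXb \AAtxthetaiconj$, $\AAtxthetadtallih \XXb \AAtxthetaiconj$, $\AAtxthetadtallih \XXb \AAtxthetadtalli^{*}$ (equivalently $\UU^H \XXb \UU$), and then conclude by projecting the optimal covariance onto $\mathrm{range}(\UU)$. The only difference is that the paper delegates the projection/optimality step to \cite[Appendix~C]{li2007range}, whereas you carry it out explicitly via the trace-preserving rescaling and the PSD-monotonicity of the LMI constraints.
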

\begin{proof}
    See Appendix~\ref{app_prop1}.\vspace{-0.05in}
\end{proof}

Proposition~\ref{prop_opt_prec} can significantly reduce the computational burden of solving \eqref{eq_problem_peb2} since the optimization can equivalently be performed over $\Lambdab \in \complexset{2\revv{G}}{2\revv{G}}$ instead of $\XXb \in \complexset{\Ntx}{\Ntx}$. Note that due to channel sparsity in mmWave \cite{el2014spatially}, we usually have $\revv{G} \ll \Ntx$. Hence, we propose to solve the following equivalent convex problem:
\begin{align} \label{eq_problem_peb3}
	\mathop{\mathrm{min}}\limits_{\Lambdab, \uu} ~
	\boldone^T \uu \quad 
	%\\ \nonumber % \label{eq_problem_const}
    \mathrm{s.t.}&\begin{bmatrix} \JJpos\rev{(\Lambdab)} & \ekk{b} \\ \ekkt{b} & u_b  \end{bmatrix} \succeq 0,  b=0,1,\\% ~, \\ \nonumber
	%&
	&\tracesmall{\UU \Lambdab \UU^H} = \revv{\frac{\Ptot}{K}},	 \Lambdab \succeq 0.\nonumber 
\end{align}
Based on Proposition~\ref{prop_opt_prec}, it is worth highlighting the remarkable connection between mmWave positioning and \textit{monopulse radar} \cite{monopulse_review}. Precisely, $\AAtxthetai$ and $\AAtxthetadtalli$ correspond to \textit{sum} and \textit{difference} beams employed in monopulse radar for target tracking with high precision. In the context of mmWave signal design, UE can be tracked by using a combination of \textit{directional} and \textit{derivative} beams pointing towards the path AODs.\footnote{Note from Appendix~\ref{app_prop1} that each beam in $\FF$ is a weighted combination of the directional and derivative beams in $\UU$.} \rev{The key difference is that a mmWave BS can exploit multipath to enable positioning under imperfect BS-UE synchronization while monopulse radar relies on LOS-only tracking with narrow beamwidth (i.e., $\AAtxthetai$ and $\AAtxthetadtalli$ contain a single beam in a monopulse system \cite{monopulse_review}). Fig.~\ref{fig_sum_diff_beampattern} illustrates the beampatterns of directional and derivative beams, which agree with those of sum and difference beams in \cite[Fig.~2]{monopulse_review}.}

%It is readily verified from \eqref{eq_lambdab} that $\FF$ can be recovered from $\Lambdab$ simply via $\FF = \UU \QQ \Gammab^{1/2}$ where $\Lambdab = \QQ \Gammab \QQ^H$ is the eigenvalue decomposition of $\Lambdab$.

% \section{Multi-user case}
% If there are $N>0$ users in a common environment, the problem can be cast as 
% \begin{align} \label{eq_problem_peb}
% 	\mathop{\mathrm{min}}\limits_{\XXb} &~~
% 	f({\PEB}_1,\ldots,{\PEB}_N)
% 	\\ \nonumber % \label{eq_problem_const}
% 	\mathrm{s.t.}&~~ \tracesmall{\XXb} = \Pt
% 	%\\ \nonumber &~~ \ranksmall{\XXb} \leq M
% 	\\ \nonumber &~~ \XXb \succeq 0
% \end{align}
% where $f:\mathbb{R}^{N}\to\mathbb{R}$ is a suitable convex function. Examples include \cite{koirala2019localization}:
% \begin{itemize}
%     \item Sum-PEB: $f(\bm{x})=\bm{1}^\top \bm{x}$
%     \item Minimize worst PEB: $f(\bm{x})=\max (x_1,\ldots, x_N)$
%     \item Proportionally fair PEB: $f(\bm{x})=\bm{1}^\top \log(\bm{x})$
% \end{itemize}

%%%%%% Robust Signal Design
\vspace{-2mm}
\subsection{Robust Signal Design with Imperfect Knowledge}\label{sec_robust} \vspace{-0.05in}
Assuming that $\etabtilde$ belongs to an uncertainty region $\mathcal{U}$ \cite{precoderNil2018}, we propose two strategies for robust signal design. In practice, $\mathcal{U}$ can be determined from the output of tracking routines, where the means and covariances of the parameters specify, respectively, the center and the extent of $\mathcal{U}$ \cite{Mendrzik_JSTSP_2019}. %, as detailed in the following.

\subsubsection{Optimization-Based Robust Unconstrained Design}\label{sec_opt_robust}
Under imperfect knowledge of $\etabtilde$, we resort to the worst-case PEB minimization strategy:
\begin{align} \label{eq_problem_worst_case}
	\mathop{\mathrm{min}}\limits_{\XXb} &~~	\mathop{\mathrm{max}}\limits_{\etabtilde \in \mathcal{U}} ~
	\fpeb(\XXb; \etabtilde)
%	\\ \nonumber % \label{eq_problem_const}
	%&
	~~	\mathrm{s.t.}~~ \tracesmall{\XXb} = \revv{\frac{\Ptot}{K}} ,~  \XXb \succeq 0 ~.
\end{align}
Discretizing $\mathcal{U}$ into \rev{a uniform grid of} $N$ points $\{ \etabtilde_n \}_{n=0}^{N-1}$, the problem in \eqref{eq_problem_worst_case} can be reformulated in the epigraph form as
\begin{align} \label{eq_problem_worst_case_convex2}
	&\mathop{\mathrm{min}}\limits_{\XXb,t,\{u_{n,b}\}} ~ t	
	\\ \nonumber % \label{eq_problem_const}
	&\mathrm{s.t.}~\begin{bmatrix}  \JJpos(\rev{\XXb}; \etabtilde_n) & \ekk{b} \\ \ekkt{b} & u_{n,b}  \end{bmatrix} \succeq 0, ~ b=0,1, ~ n = 0, \ldots, N-1
	\\ \nonumber &\hspace{7mm} u_{n,0} +  u_{n,1} \leq t, ~n=0,\ldots,N-1 \\ \nonumber
	&\hspace{7mm} \tracesmall{\XXb} = \revv{\Ptot/K} ,~  \XXb \succeq 0, 
\end{align}
where $\JJpos(\rev{\XXb;}\etabtilde_n)$ represents the FIM in \eqref{eq_jjpos} evaluated at $\etabtilde = \etabtilde_n$, and $t$ and $\{u_{n,b}\}_{\forall n,b}$ are auxiliary variables. Similar to \eqref{eq_problem_peb2}, the problem in \eqref{eq_problem_worst_case_convex2} is an SDP and can be solved using convex optimization \cite{cvx}. 

%Since \eqref{eq_problem_worst_case_convex2} has $\Ntx^2 + 2N + 1$ variables, its complexity is approximately given by \rev{$\mathcal{O}((\Ntx^2 + 2N)^3)$}
%$\mathcal{O}(\Ntx^6)$ \cite[Ch.~10]{nemirovski2004interior}.

\begin{figure}
	\centering
    %\vspace{-0.2in}
	\includegraphics[width=0.75\linewidth]{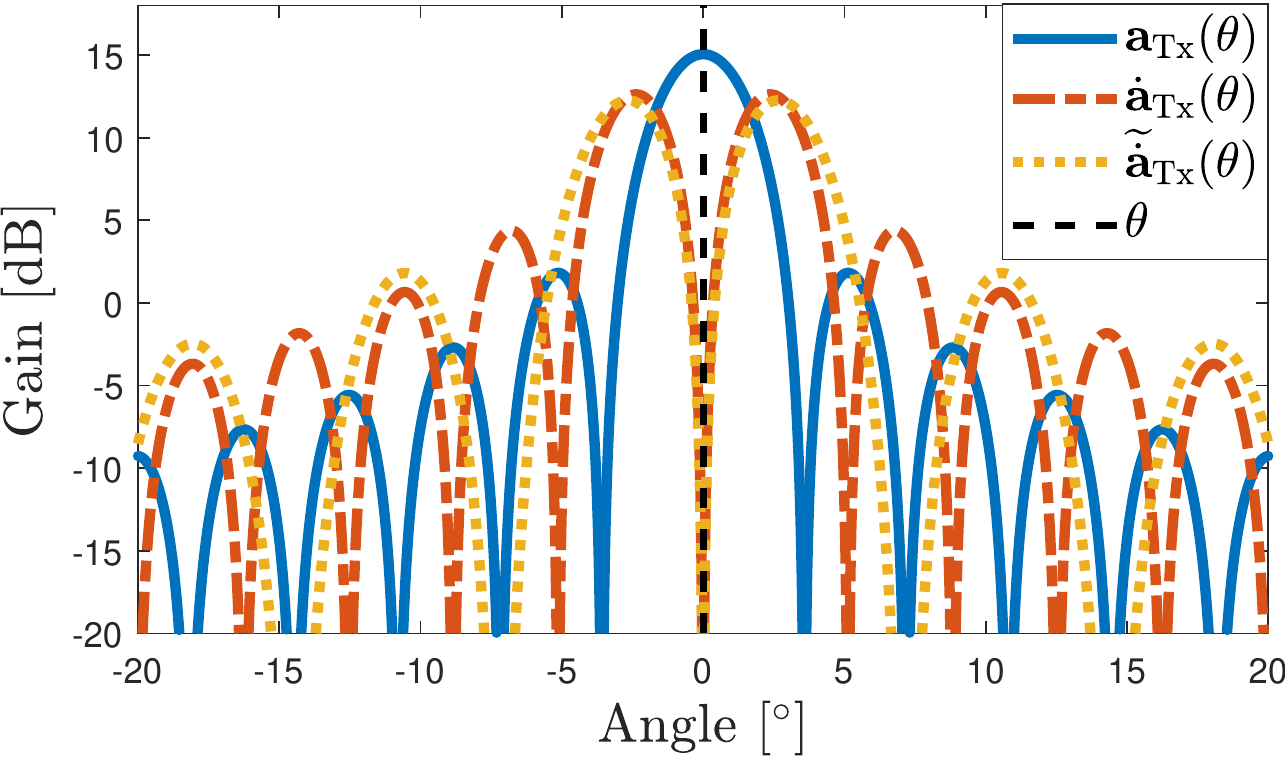}
	\caption{Uniform linear array (ULA) beampatterns of the directional beam $\atx(\theta)$, the digital derivative beam $\protect\atxdt(\theta)$ and the analog derivative beam $\protect\atxdttilde(\theta)$ for $\theta = 0$ with $\Ntx=32$ antennas. Intuitively, the directional beam provides the required SNR for position estimation, while the derivative beam helps the UE to best identify small deviations from the boresight direction, like in monopulse radar, as evidenced by its sharp curvature around the boresight. High-precision tracking of UE involves weighted combination of directional and derivative beams.}
	\label{fig_sum_diff_beampattern}
	\vspace{-0.1in}
\end{figure}

\subsubsection{Codebook-Based Heuristic Design with Optimized Beam Power Allocation}\label{sec_codebook_design}
Due to the presence of multiple grid points $\{ \etabtilde_n \}_{n=0}^{N-1}$, \eqref{eq_problem_worst_case_convex2} cannot be transformed into a lower dimensional problem, as in \eqref{eq_problem_peb3}. To devise a low-complexity signal design approach as an alternative to the SDP in \eqref{eq_problem_worst_case_convex2}, we propose a codebook-based heuristic design inspired by Proposition~\ref{prop_opt_prec}. Specifically, we consider the following digital and analog codebooks to span the AOD uncertainty intervals of the different paths:
\begin{align} \label{eq_ffdig}\vspace{-0.2in}
    \FFdigital \triangleq \left[ \FFsum ~ \FFdiff \right] \,,~ 
    \FFanalog \triangleq \left[ \FFsum ~ \FFdifftilde \right] ~,
\end{align}
where 
   $\FFsum \triangleq \left[ \FFsum_0 \, \ldots \, \FFsum_{{\revv{G}}-1} \right]$, 
    $\FFdiff \triangleq \left[ \FFdiff_0 \, \ldots \, \FFdiff_{{\revv{G}}-1} \right]$, 
    $\FFdifftilde \triangleq \left[ \FFdifftilde_0 \, \ldots \, \FFdifftilde_{{\revv{G}}-1} \right]$, in which  
    %$\FFsum_{\revv{g}} \triangleq \left[ \atx(\theta_{{\revv{g}},0}) \, \ldots \, \atx(\theta_{{\revv{g}},N_{\revv{g}}-1}) \right]$, 
    %$\FFdiff_{\revv{g}} \triangleq \left[ \atxdt(\theta_{{\revv{g}},0}) \, \ldots \, \atxdt(\theta_{{\revv{g}},N_{\revv{g}}-1}) \right]$, and 
    %$\FFdifftilde_{\revv{g}} \triangleq \left[ \atxdttilde(\theta_{{\revv{g}},0}) \, \ldots \, \atxdttilde(\theta_{{\revv{g}},N_{\revv{g}}-1}) \right]$, 
% $\FFsum$ $\triangleq$ $[ \FFsum_0 \, \ldots \, \FFsum_{P-1} ]$, $\FFdiff$ $\triangleq$ $[\FFdiff_0 \, \ldots \, \FFdiff_{P-1}]$, $\FFdifftilde$ $\triangleq$ $[ \FFdifftilde_0 \, \ldots \, \FFdifftilde_{P-1} ]$, in which $\FFsum_p$ $\triangleq$ $[ \atx(\theta_{p,0}) \, \ldots \, \atx(\theta_{p,N_p-1}) ]$, $\FFdiff_p$ $\triangleq$ $[ \atxdt(\theta_{p,0}) \, \ldots \, \atxdt(\theta_{p,N_p-1}) ]$, and $\FFdifftilde_p$ $\triangleq$ $[ \atxdttilde(\theta_{p,0}) \, \ldots \, \atxdttilde(\theta_{p,N_p-1}) ]$, 
%\begin{subequations}
\begin{align}\label{eq_codebook_def}\vspace{-0.2in}
%\label{eq_ffsum}
  %  \FFsum &\triangleq \left[ \FFsum_0 \, \ldots \, \FFsum_{{\revv{G}}-1} \right] ~, \\ 
  %  \FFdiff &\triangleq \left[ \FFdiff_0 \, \ldots \, \FFdiff_{{\revv{G}}-1} \right] ~, \\
  %  \FFdifftilde &\triangleq \left[ \FFdifftilde_0 \, \ldots \, \FFdifftilde_{{\revv{G}}-1} \right] ~, \\
    \FFsum_{\revv{g}} &\triangleq \left[ \atx(\theta_{{\revv{g}},0}) \, \ldots \, \atx(\theta_{{\revv{g}},N_{\revv{g}}-1}) \right] ~, \\
    \FFdiff_{\revv{g}} &\triangleq \left[ \atxdt(\theta_{{\revv{g}},0}) \, \ldots \, \atxdt(\theta_{{\revv{g}},N_{\revv{g}}-1}) \right] ~,  \\
    \FFdifftilde_{\revv{g}} &\triangleq \left[ \atxdttilde(\theta_{{\revv{g}},0}) \, \ldots \, \atxdttilde(\theta_{{\revv{g}},N_{\revv{g}}-1}) \right] ~,
\end{align}
%\end{subequations}
for ${\revv{g}} \in \{0, \ldots, {\revv{G}}-1\}$. Here, $\{ \theta_{{\revv{g}},i} \}_{i=0}^{N_{\revv{g}}-1}$ denote the evenly spaced AODs covering the uncertainty interval of the $\thnew{{\revv{g}}}$ path, with an angular spacing equal to $3 \, \rm{dB}$ (half-power) beamwidth \cite{zhang2018multibeam}, \cite[Ch.~22.10]{orfanidis2002electromagnetic}. Moreover, $\atxdttilde(\theta)$ denotes the best analog approximation (i.e., with unit-modulus entries) %\footnote{In the sense that $\atxdttilde(\theta)\approx s\, \atxdt(\theta)$ for some scaling $s$.} 
to $\atxdt(\theta)$ determined using gradient projections iterations in \cite[Alg.~1]{analogBeamformerDesign_TSP_2017}. In \eqref{eq_ffdig}, $\FFsum$ represents a standard \textit{directional} beam codebook% \cite{alkhateeb2014channel,tasos_precoding2020}
, while $\FFdiff$ and $\FFdifftilde$ are novel \textit{derivative} codebooks stemming from Proposition~\ref{prop_opt_prec}.  \rev{Note that in the regime of large number of antennas, broader beams than steering vectors should be used to avoid very large $N_{\revv{g}}$.}

% \begin{table}
% \caption{Simulation Parameters}
% \centering
% \scriptsize{
%     \begin{tabular}{|l|l|}
%         \hline
%         Carrier Frequency, $\fc$ & $28 \, \rm{GHz}$  \\ \hline
%         Number of Subcarriers, $K$ & $1024$ \\ \hline
%         Subcarrier Spacing, $\deltaf$ & $120 \, \rm{kHz}$  \\ \hline
%          Total Bandwidth, $K \deltaf$ & $122.88 \, \rm{MHz}$  \\ \hline
%         Number of Symbols, $L$ & $1$ \\ \hline
%         Number of BS Antennas (ULA), $\Ntx$ & $32$ \\ \hline 
%         Number of UE Antennas (UCA), $\Nrx$ & $16$  \\ \hline
%         Number of RF Chains, $\Nrxrf$ & $16$  \\ \hline
%         %Number of Beams (Pilot Sequences), $M$ & $3$  \\ \hline 
%         BS Location, $\qq$ & $\left[0, 0 \right] ~ \rm{m}$  \\ \hline 
%         UE Location, $\pp$ & $\left[25, 10 \right] ~ \rm{m}$  \\ \hline 
%         Scatterer Location, $\rr_1$ & $\left[15, 25 \right] ~ \rm{m}$  \\ \hline 
%         Noise Power Spectral Density, $N_0$ & $-174 ~ \rm{dBm/Hz}$  \\ \hline 
%         Rx Noise Figure, $\Frx$ & $8 ~ \rm{dB}$  \\ \hline 
%         Noise Variance, $\sigma^2$ & $10^{0.1 (\Frx + N_0)} K \deltaf $  \\ \hline 
%         Total Transmit Power & $20 ~ \rm{dBm}$  \\ \hline 
%     \end{tabular}}
%     \label{tab_parameters}
%     \vspace{-0.1in}
% \end{table}

\begin{table}
\caption{Simulation Parameters}
\centering
\scriptsize{
    \begin{tabular}{|l|l||l|l|}
        \hline
        $\fc$ & $28 \, \rm{GHz}$ & BS loc. $\qq$ & $\left[0, 0 \right] ~ \rm{m}$  \\ \hline   
        $K$ & $1024$ & UE loc. $\pp$ & $\left[25, 10 \right] ~ \rm{m}$ \\ \hline 
        $\deltaf$ & $120 \, \rm{kHz}$ & \revv{incidence} loc. $\rr_1$ & $\left[15, 25 \right] ~ \rm{m}$  \\ \hline
        $L$ & $1$ & noise PSD $N_0$ & $-174 ~ \rm{dBm/Hz}$     \\ \hline
        $\Ntx$ & $32$ &  noise figure $\Frx$ & $8 ~ \rm{dB}$   \\ \hline 
        $\Nrx$ & $16$ & $\sigma^2$ & $10^{0.1 (\Frx + N_0)} K \deltaf $    \\ \hline
        $\Nrxrf$ & $16$ &  Tx Power, \revv{$\Ptot/(LM)$} & $20 ~ \rm{dBm}$   \\ \hline 
        \rev{$M$ (Scen.1)} & \rev{$16$} &  \rev{$M$ (Scen.2)} & \rev{$8$} \\ \hline
        \rev{$N$ (Scen.1)} & \rev{$36$} &  \rev{$N$ (Scen.2)} & \rev{$16$} \\ \hline
    \end{tabular}}
    \label{tab_parameters}
    \vspace{-0.1in}
\end{table}

Given a predefined codebook with entries $\FF = \left[ \fb_0 \, \ldots \, \fb_{M-1} \right]$ (with $M=2 \sum_{\revv{g}=0}^{\revv{G}-1} N_{\revv{g}}$), \revv{each normalized to have the squared norm $\norm{\fb_m}_2^2 = \Ptot / (KLM) ~ \forall m$}, we consider the optimal beam power allocation problem in \rev{$\rhob=[\rho_0,\ldots,\rho_{M-1}]^T$}:
\begin{align} \label{eq_problem_beampower}
	\mathop{\mathrm{min}}\limits_{\rev{\rhob},t,\{u_{n,k}\}} &~~ t	
	\\ \nonumber % \label{eq_problem_const}
	\mathrm{s.t.}&~~\begin{bmatrix}  \JJpos(\rev{\XXb}; \etabtilde_n) & \ekk{b} \\ \ekkt{b} & u_{n,b}  \end{bmatrix} \succeq 0, ~ b=0,1, ~ n = 0, \ldots, N-1 
	\\ \nonumber &~~ u_{n,0} +  u_{n,1} \leq t, ~n=0,\ldots,N-1  \\ \nonumber
	&~~ \XXb = \revv{L} \FF \, \diag{\rev{\rhob}}  \FF^H, ~ \tracesmall{\XXb} = \revv{\Ptot/K} ,~  \rev{\rhob} \geq \mathbf{0} ~,
\end{align}
% \begin{align} \label{eq_problem_beampower}
% 	\mathop{\mathrm{min}}\limits_{\mathbf{p}, \uu} &~~
% 	\boldone^T \uu 
% 	\\ \nonumber % \label{eq_problem_const}
%     \mathrm{s.t.}&~~\begin{bmatrix}  \JJpos & \ekk{k} \\ \ekkt{k} & u_k  \end{bmatrix} \succeq 0, ~ k=0,1 ~, \\ \nonumber
%     &~~ \XXb = \FF \, \diag{\mathbf{p}}  \FF^H ~,  %\\ \nonumber 
% 	%&~~
% 	\tracesmall{\XXb} = 1 ,~ \mathbf{p} \geq \mathbf{0} ~,
% \end{align}
which yields the optimized codebook $\FFopt = \left[ \rev{\sqrt{\rho_0}} \fb_0 \, \ldots \, \rev{\sqrt{\rho_{M-1}}} \fb_{M-1} \right]$. 

% \begin{remark}[\rev{Mapping of Powers to Time Sharing Factors}]\label{rem_map}
% \rev{When the BS utilizes the maximum power for each transmission, the problem should be adapted as follows. The
% $M$ beams will be transmitted during a predetermined $D>1$ frames (i.e., $LD$ OFDM symbols) and the optimal value \revv{$\rho_m$} from \eqref{eq_problem_beampower} represents the fractions of frames (i.e., time sharing factors) during which beam $\fb_m$ should be used.}
% \end{remark}

\subsubsection{\revv{Time Sharing Optimization}}\label{sec_timeshare_alg}
\revv{When the BS transmits at a maximum power $\Pmax$ per symbol, the power allocation formulation in \eqref{eq_problem_beampower} can be equivalently used to optimize the beam \textit{time sharing factors} $L_m, \forall m$ (i.e., the number of times $\fb_m$ is transmitted with a fixed maximum power) since the expression $\XXb = \revv{L} \FF \, \diag{{\rhob}}  \FF^H$ is valid if $L_m = L \rho_m$. To cover such scenarios, we provide a signal design algorithm for time sharing implementation in Algorithm~\ref{alg_time_main}, where continuous power values are mapped to discrete time sharing factors.}

\begin{algorithm}%\footnotesize
	\caption{\revv{Time Sharing Optimization via Power Allocation}}
	\label{alg_time_main}
% 	\footnotesize
	\begin{algorithmic}[1]
	    \State \revv{\textbf{Input:} Number of symbols per beam $L$, maximum transmit power per symbol $\Pmax$.}
	    \State \revv{\textbf{Output:} Time sharing factors $L_0, \, \ldots , \, L_{M-1}$ }
	    \begin{enumerate}%[label=(\alph*)]
	        \item \revv{Solve the power allocation problem in \eqref{eq_problem_beampower} with the total power constraint $\Ptot = L M \Pmax $, yielding $\rho_0 ,\, \ldots ,\, \rho_{M-1}$.}
	        \item \revv{Set the time sharing factors as $L_m = \nint{L \rho_m}, ~ m = 0, \ldots, M-1$ by rounding to the nearest integer.}
	        \end{enumerate}
	\end{algorithmic}
%	\normalsize
\end{algorithm}

\subsubsection{\rev{Complexity Analysis}}\label{sec_complexity}
\rev{In this part, we analyze the complexity of the SDPs in \eqref{eq_problem_worst_case_convex2} and \eqref{eq_problem_beampower}. According to \cite[Ch.~11]{nemirovski2004interior}, the complexity of an SDP is given by $\mathcal{O}\left( n^2 \sum_{i=1}^{C} m_i^2 + n \sum_{i=1}^{C} m_i^3 \right)$, where $n$ is the number of optimization variables, $C$ is the number of linear matrix inequality (LMI) constraints, and $m_i$ is the row/column size of the matrix associated with the $\thnew{i}$ LMI constraint. In \eqref{eq_problem_worst_case_convex2}, we have $n = \Ntx^2 + 2N + 1$, $C = 2N+1$, $m_i = 4\revv{G}+3$ for $i < C$ and $m_i = \Ntx$ for $i = C$. Assuming $N \ll \Ntx^2$ (which holds in practice, as seen from Table~\ref{tab_parameters}) and $\revv{G} \ll \Ntx$ (due to channel sparsity), this yields an approximate complexity of $\mathcal{O}(\Ntx^6)$.}
\rev{Following similar arguments, the complexity of the SDP in \eqref{eq_problem_beampower} is roughly given by $\mathcal{O}(N^3)$ under the conditions $N \approx M$ (since both parameters depend on the size of $\mathcal{U}$) and $\revv{G} \ll N$.}
%According to \cite[Ch.~10]{nemirovski2004interior}, \eqref{eq_problem_beampower} has $M+2N+1$ variables and thus leads to the complexity of \rev{$\mathcal{O}((M+2N)^3)$}. 
Clearly, \eqref{eq_problem_beampower} offers a much more efficient approach to robust signal design than \eqref{eq_problem_worst_case_convex2} as \rev{$N \ll \Ntx^2$} in practical scenarios (which will be verified in the next section through execution time analysis). \rev{Moreover, the complexity of \eqref{eq_problem_worst_case_convex2} is comparable to that of the unconstrained design in \cite[Eq.~(23)]{precoderNil2018}, while \eqref{eq_problem_beampower} incurs a similar level of complexity to the codebook-based design in \cite[Eq.~(46)]{tasos_precoding2020}.}

\vspace{-0.05in}
\section{Numerical Results}\vspace{-0.05in}\label{sec_num_res}
To evaluate the performance of the proposed signal design algorithms, simulations are carried out using the parameters in Table~\ref{tab_parameters} \cite{kakkavas2019performance,tasos_precoding2020}, where the BS has a uniform linear array (ULA) and the UE is equipped with a uniform circular array (UCA) so that the PEB does not depend on its unknown orientation\footnote{The squared array aperture function (SAAF) \cite[Def.~2]{Array_Loc_Win_2016}, which provides complete characterization of the effect of array geometry on the PEB, is independent of the AOA of the incident signal for UCAs \cite[Def.~4]{Array_Loc_Win_2016}, meaning that the PEB is independent of the orientation. However, the UE should still estimate the orientation as an unknown nuisance parameter for positioning.} $\psi$ \cite{Array_Loc_Win_2016}. To collect energy from all directions, the receive combiner is set as $\WW = \Imatrix_{\Nrxrf}$ \cite{kakkavas2019performance}, which can equivalently be realized by an analog array at the UE with a DFT codebook over $\Nrx\times M$ frames. We consider a two-path environment ($\revv{G}=2$) with a LOS and an NLOS path\footnote{Due to space limitations, results with higher number of NLOS paths are not presented. However, the main trends and conclusions will still be valid, with the difference pertaining to the saturated PEB values \cite{kakkavas2019performance} in the NLOS-limited regime (which will be described in Fig.~\ref{fig_peb_regimes}).}, whose gains are given, respectively, by $\abs{\alpha_0} = c / (4 \pi \fc \norm{ \pp - \qq }_2 )$ and $\abs{\alpha_1} = \gamma \, c / (4 \pi \fc [\norm{ \qq - \rr_1 }_2 + \norm{ \rr_1 - \pp }_2] )$, where $\gamma$ denotes the NLOS path reflection coefficient. The phases of the complex gains are assumed to be uniformly distributed over $\left[-\pi, \pi \right]$. Moreover, we explore two scenarios where the uncertainty in the \revv{incidence point} location in $x$ and $y$ directions is taken to be $5 \, \rm{m}$ and $0.3 \, \rm{m}$ in Scenario~1 and Scenario~2, respectively, while the uncertainty in UE location is fixed to $0.3 \, \rm{m}$ in both scenarios\revv{\footnote{\label{fn_uncert}\revv{ $N=36$ (resp. $16$) points are constructed by combining $9$ (resp. $4$) incidence and $4$ UE locations, all uniformly distributed in the corresponding uncertainty regions. As a general guideline, $N$ could be chosen such that the angular separation between the grid points in the $x-y$ plane is close to the corresponding beamwidth, which allows the locations in between the grid points to be covered by the same set of beams.}}}. We model the clock bias as \rev{$p(\deltat)=\mathcal{N}(\deltat; 0,\sigmaclk)$}%$\deltat\sim \mathcal{N}(0,\sigmaclk)$
, with $\JJclk={1}/{\sigmaclk}$ \cite{tasos_precoding2020}. %,tasos_WCL_2021}. 
\rev{To establish a clear connection between the clock bias uncertainty $\sigmaclkk$ and the PEB, $\sigmaclkk$ will be expressed in meters instead of seconds.}

\vspace{-0.05in}
\subsection{Analysis of PEB Regimes against Clock Bias Uncertainty}\label{sec_peb_regimes}
To investigate the positioning performance under different % levels of clock bias uncertainty 
$\sigmaclkk$, Fig.~\ref{fig_peb_regimes} plots the PEBs in Scenario~1 achieved by the digital codebook $\FFdigital$ in \eqref{eq_ffdig}, optimized using \eqref{eq_problem_beampower}, for various values of the NLOS reflection coefficient $\gamma$, along with the PEB curve for $\gamma = 0.1$ obtained by assuming perfect AOD and TOA estimation for the LOS path.\footnote{This PEB analysis is needed to better understand the results obtained with the different signal designs in Sec.~\ref{sec_perf}.} We observe that for small $\sigmaclkk$ (i.e., almost perfect synchronization), the PEB is mainly limited by the accuracy of the LOS parameters as the position information can be obtained by utilizing only the LOS path. 
%As seen from the perfect LOS curve 
In this \textit{LOS-limited regime}, perfect AOD and TOA information for the LOS path improves PEB up to its limit determined by $\sigmaclkk$, while a larger NLOS gain $\gamma$ does not have any effect on PEB. 
On the other hand, for high $\sigmaclkk$, \rev{which is more relevant in practice considering 5G specifications \cite[Table~6.1]{TR38.857} (where $\sigmaclkk = 15 \, \rm{m}$ is used)}, a larger $\gamma$ improves PEB performance significantly. This is because the LOS path alone cannot provide sufficient information for positioning due to large clock bias uncertainty and the NLOS path contributes to synchronizing the UE clock (and, thus positioning) in this \textit{NLOS-limited regime} (where even perfectly estimated LOS parameters do not improve positioning accuracy). In the absence of an NLOS path (i.e., $\gamma = 0$), the PEB grows unbounded with $\sigmaclkk$ as the TOA of the LOS path carries almost no ranging information for high $\sigmaclkk$ and the AOD information is not enough to locate the UE. 

Finally, we observe a transition region with unit slope that connects the LOS- and NLOS-limited regimes. In this \textit{clock bias prior-limited regime}, perfect LOS information does not improve PEB because $\sigmaclkk$ is large enough to dominate the position estimation error, while NLOS information does not help positioning either because $\sigmaclkk$ is small enough such that the prior information on clock bias becomes much more significant than the observation-related information coming from NLOS TOA and AOD estimates. %(In contrast to the LOS path, the NLOS path introduces two unknowns (i.e., the scatterer location) to be estimated). 
After a certain level of $\sigmaclkk$ (which depends on $\gamma$), the effect of prior information becomes negligible and the observation-related information becomes dominant, which yields the saturated curves in the NLOS-limited regime. %Therefore, in the transition region, the PEB is essentially determined by the amount of prior information on clock bias.%\footnote{Note that this conclusion is valid for any type of codebook/precoder employed by the BS. The size of the clock bias prior-limited regime can change depending on the NLOS path gain and the employed set of beams; however, within that regime, any signal design approach should produce a line with unit slope, suggesting that power allocation can only have trivial impact on PEB (as seen from the identical PEB performances for different $\gamma$ values).}.
%, leading to a transition line with slope $1$ that intersects with the origin (i.e., PEB equals clock bias uncertainty)
%\vspace{-0.2in}

\begin{figure}
	\centering
    %\vspace{-0.2in}
	\includegraphics[width=0.8\linewidth]{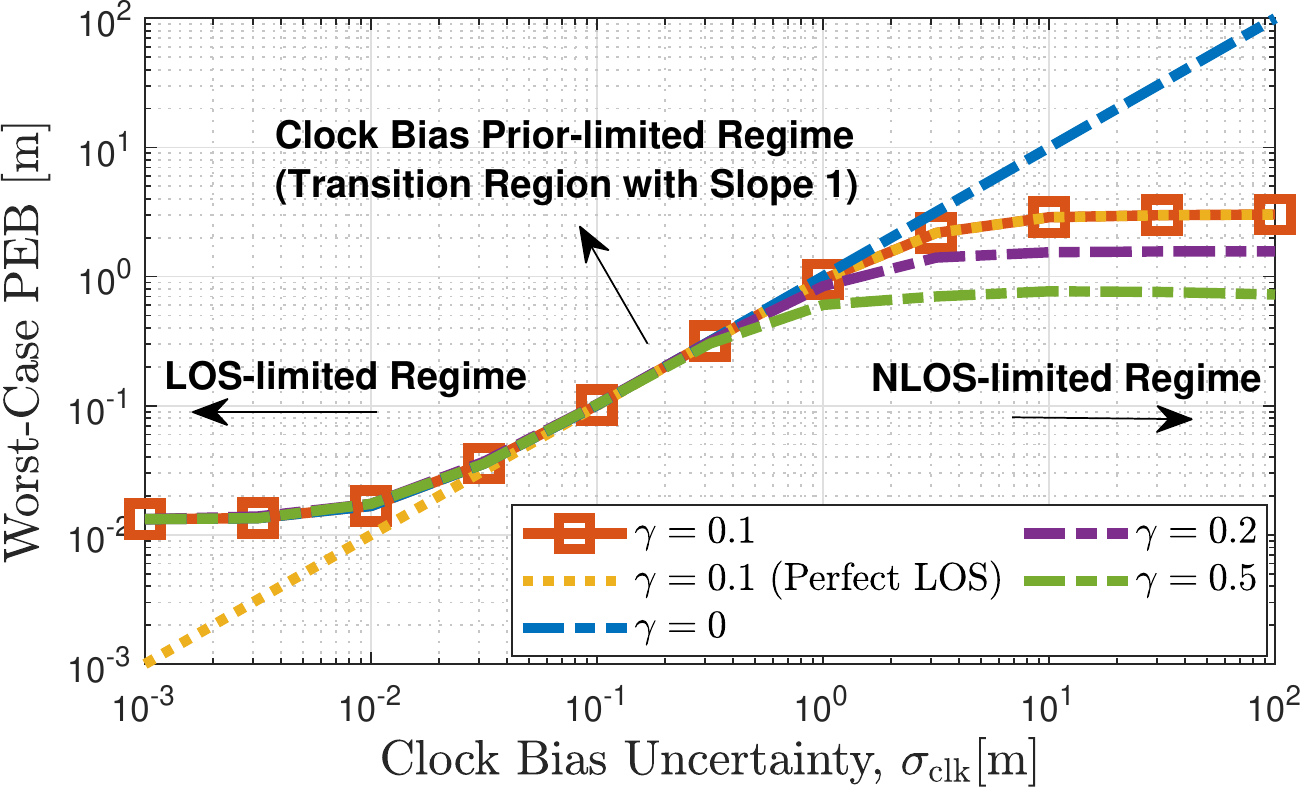}
	\caption{PEB regimes with respect to clock bias uncertainty for the digital codebook $\FFdigital$, where $\gamma$ is the reflection coefficient of the NLOS path.}
	\label{fig_peb_regimes}
	\vspace{-0.15in}
\end{figure}

% plot sum beam codebook for [1e-3 1e3] m clock bias range
% plot gamma = 0, 0.1, 10, 100 for NLOS case 
% plot perfect ToA and AoD LOS for LOS case

%\vspace{-0.2in}
\subsection{Performance of Optimal and Codebook Based Strategies}\label{sec_perf}
In this part, we compare the worst-case PEB performances of the following signal design strategies by setting $\gamma = 0.1$:
\begin{itemize}
    \item \textit{Proposed:} Optimal robust design in \eqref{eq_problem_worst_case_convex2}, digital and analog codebooks, $\FFdigital$ and $\FFanalog$ in \eqref{eq_ffdig}, optimized via \eqref{eq_problem_beampower}. 
    \item \textit{Benchmark:} Directional beam codebook $\FFsum$ \cite{tasos_precoding2020}, optimized via \eqref{eq_problem_beampower} and with uniform power allocation.
\end{itemize}
Fig.~\ref{fig_ex1_peb_power} shows the worst-case PEBs and the relative illuminations of the LOS path\footnote{For a given precoder $\FF$, the relative LOS illumination is calculated as $\int_{\theta \in {\rm{LOS}}} \norm{\atx^T(\theta) \FF}_2^2 \mathrm{d}\theta/ \int_{\theta \in \{{\rm{LOS}},{\rm{NLOS}}\}} \norm{\atx^T(\theta) \FF}_2^2\mathrm{d}\theta$. As the relative LOS illumination exhibits similar trends for Scenario~1 and Scenario~2, the results are shown only for Scenario~1.} for Scenario~1 and Scenario~2. Moreover, Fig.~\ref{fig_beampattern} illustrates the beampatterns of the considered strategies in Scenario~1 for two different values of $\sigmaclkk$. Finally, the average execution times on MATLAB for Scenario 1 were found to be 17.3 s (to solve \eqref{eq_problem_worst_case_convex2}), 3.8 s (using $\FFdigital$ and $\FFanalog$), 3.7 s (using $\FFsum$).
%Table~\ref{tab_times} shows the average execution times on MATLAB corresponding to the different signal design strategies in Scenario~1.

% Example 1:
% UE uncertainty: 0.3 m
% Scatterer uncertainty: 5 m

\begin{figure}
	\centering
    %\vspace{-0.2in}
	\includegraphics[width=0.8\linewidth]{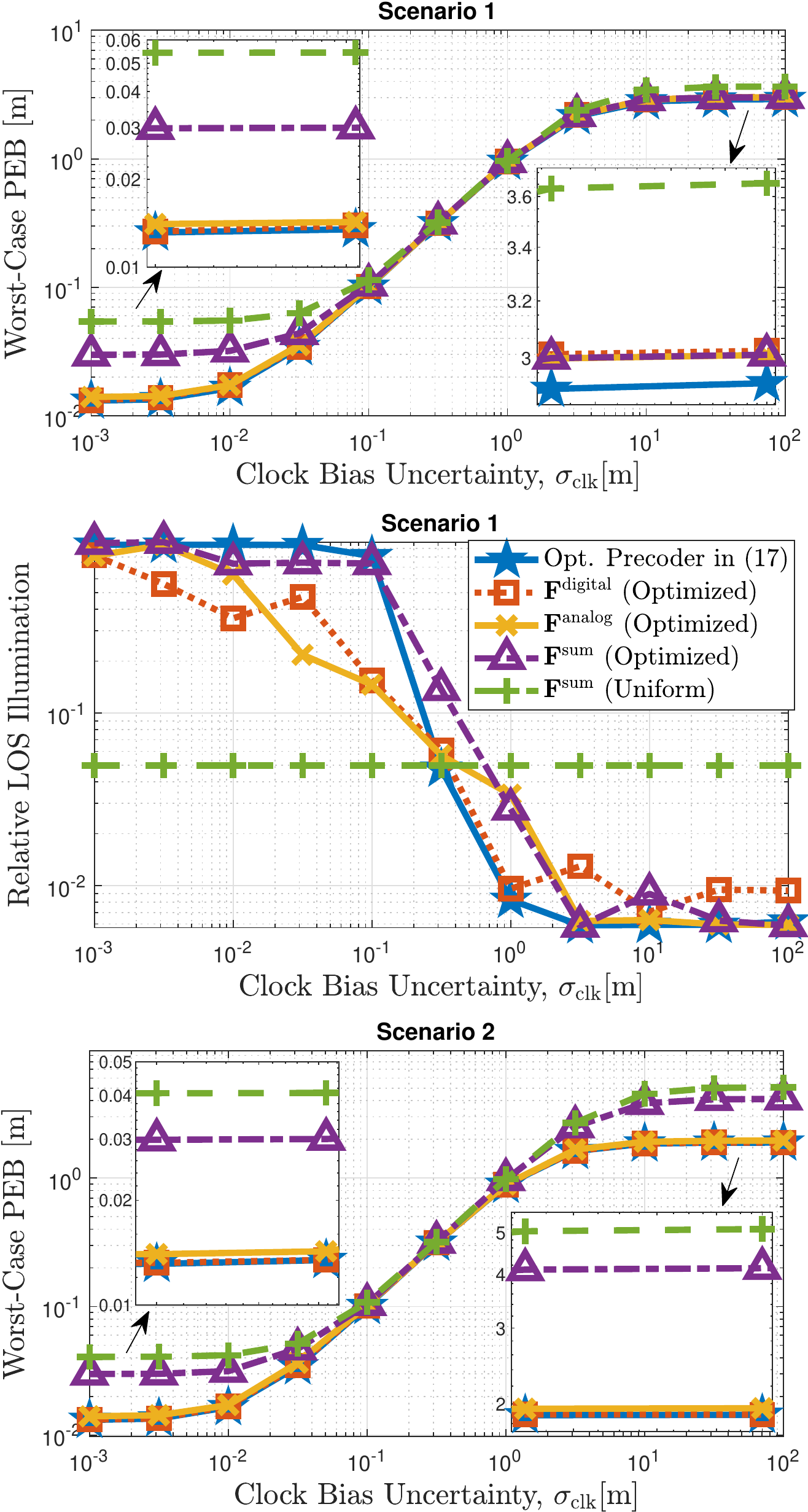}
	\caption{Worst-case PEBs and relative illuminations of the LOS path for the considered signal design strategies in Scenario~1 (large \revv{incidence} location uncertainty) and Scenario~2 (small \revv{incidence} location uncertainty). In Scenario~1, $N_0 = 2$ and $N_1 = 6$ beams are needed to cover the uncertainty region of the UE and the \revv{incidence point}, respectively, while we have $N_0 = N_1 = 2$ in Scenario~2. \revv{As observed from the uniform and optimized $\FFsum$ curves, the proposed power allocation in \eqref{eq_problem_beampower} always improves positioning performance.}}
	\label{fig_ex1_peb_power}
	\vspace{-0.15in}
\end{figure}

In agreement with the observations in Fig.~\ref{fig_peb_regimes}, all strategies experience the three different PEB regimes with respect to $\sigmaclkk$. As expected, most of the power budget is allocated to illuminate the LOS path in the LOS-limited regime, whereas the NLOS path gets more power as $\sigmaclkk$ increases. Additionally, the proposed optimal strategy in \eqref{eq_problem_worst_case_convex2} achieves the best performance in all operation regimes and scenarios. In Scenario~1, the proposed codebooks $\FFdigital$ and $\FFanalog$ outperform the traditional (optimized) codebook $\FFsum$ in the LOS-limited regime, while they achieve almost the same performance in the NLOS-limited regime. The reason is that although directional beams maximize the SNR over the specified angular region (see Fig.~\ref{fig_beampattern}), positioning requires accurate AOD estimation, which can be enabled by a balanced combination of directional and derivative beams, as illustrated in Fig.~\ref{fig_sum_diff_beampattern}. 
%it is seen that the sum beam maximizes the gain towards the UE, while the difference beam helps the UE to best identify small deviations from the nominal beam direction $\theta$, as evidenced by its sharp curvature around $\theta$. 
Due to the relatively small uncertainty in the LOS path compared to the NLOS path in Scenario~1, $\FFdigital$ and $\FFanalog$ can provide non-negligible performance gains over $\FFsum$ in the LOS-limited regime, whereas $\FFsum$ in the NLOS-limited regime can produce beampatterns very similar to those achieved by $\FFdigital$ and $\FFanalog$ over a large angular region, as seen from Fig.~\ref{fig_beampattern}. This is also corroborated by the PEB results of Scenario~2, where $\FFdigital$ and $\FFanalog$ significantly outperform $\FFsum$ in both LOS- and NLOS-limited regimes due to low uncertainty in UE and \revv{incidence point} locations. 
% X-axis 1e-3 1e2 m

Comparing the proposed optimal and codebook based strategies, we see that $\FFdigital$ and $\FFanalog$ exhibit PEB performances very close to the optimal approach in \eqref{eq_problem_worst_case_convex2} in most cases, which implies that the proposed codebooks with optimized power allocation provide near-optimal and low-complexity solutions to spatial signal design. In addition, for all regimes and scenarios, the performance gap between $\FFdigital$ and $\FFanalog$ is negligible, which suggests that the proposed methods can be implemented in low-cost analog mmWave architectures %\cite{alkhateeb2014channel} 
without any loss in positioning accuracy. To summarize the main results, Table~\ref{tab_guide} provides practical guidelines on which spatial signals to use under different PEB regimes.
\vspace{-0.1in}

% Example 2:
% UE uncertainty: 0.3 m
% Scatterer uncertainty: 0.3 m

% \begin{figure}
% 	\centering
%     %\vspace{-0.2in}
% 	\includegraphics[width=0.8\linewidth]{Figures/PEB_sum_power_vs_bias_ex2_combined_v2.eps}
% 	\caption{Worst-case PEBs and relative illumination of the LOS paths for the considered signal design strategies in Scenario~2.}
% 	\label{fig_ex2_peb_power}
% 	\vspace{-0.1in}
% \end{figure}

\begin{figure}
	\centering
    %\vspace{-0.1in}
	\includegraphics[width=0.8\linewidth]{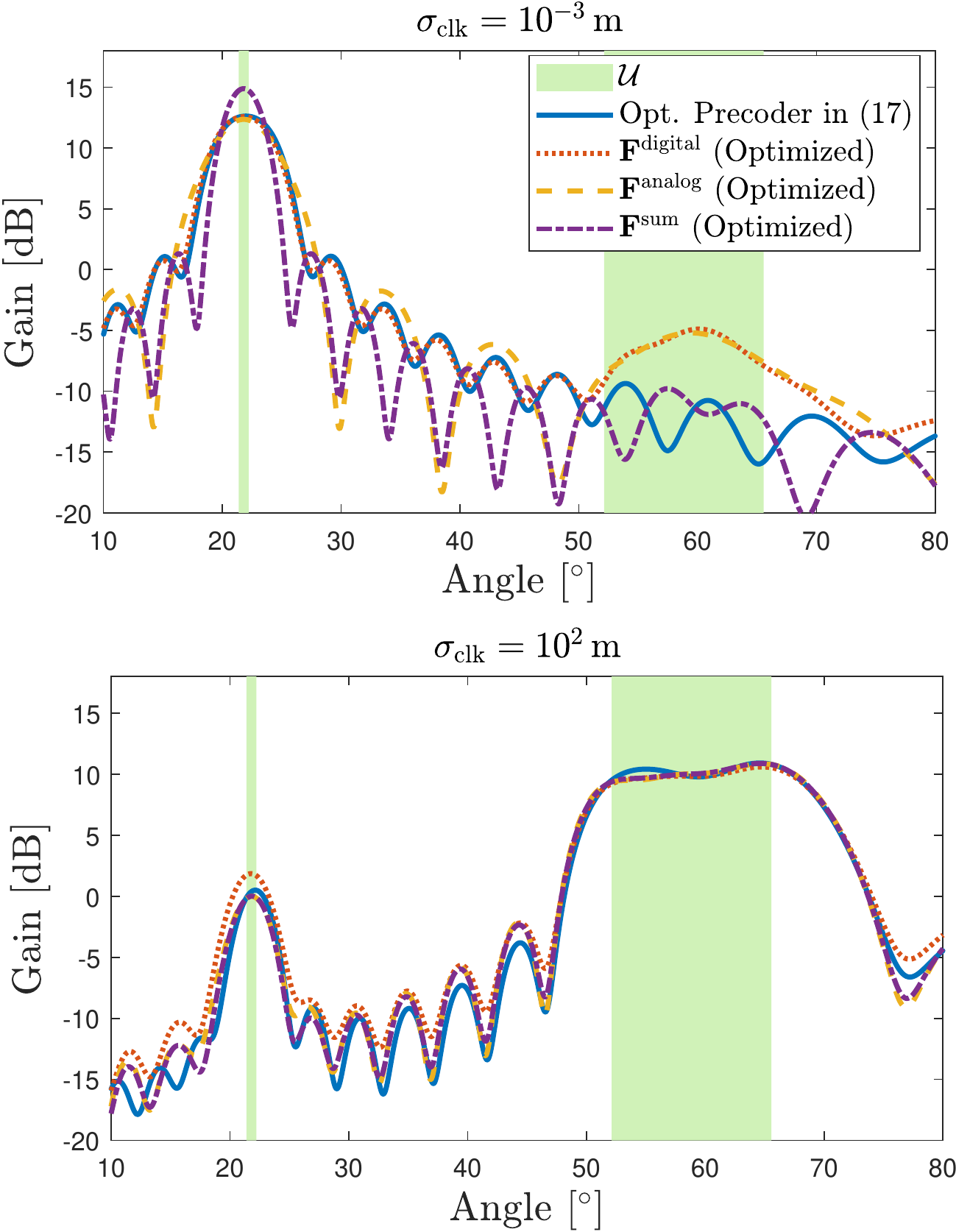}
	\caption{Aggregated beampatterns ($\norm{\atx^T(\theta) \FF}_2^2$) of the different strategies in Scenario~1 for two different levels of \revv{$\sigmaclkk$}. While $\FFsum$ maximizes the SNR towards the UE location for low $\sigmaclkk$, it is not \revv{positioning-optimal} as the derivative beams in $\FFdigital$ and $\FFanalog$ \revv{are needed for optimality}.}
	\label{fig_beampattern}
	\vspace{-0.15in}
\end{figure}

%\begin{table}\vspace{-0.02in}
%\caption{Average Execution Times of Signal Design Strategies\vspace{-0.1in}}
%\centering
%\scriptsize{
 %   \begin{tabular}{|l|l|l|l|}
  %      \hline
   %     Optimal Precoder in \eqref{eq_problem_worst_case_convex2} & $\FFdigital$ & $\FFanalog$ & $\FFsum$  \\ \hline
    %    $17.32 \, \rm{s}$ & $3.84 \, \rm{s}$ & $3.82 \, \rm{s}$ & $3.73 \, \rm{s}$ \\ \hline
    %\end{tabular}}
    %\label{tab_times}
    %\vspace{-0.1in}
%\end{table}

\begin{table}%\vspace{-0.2in}
\caption{Guidelines on Which Spatial Signal to Use under Different PEB Regimes in Fig.~\ref{fig_peb_regimes}\vspace{-0.1in}}
\centering
\scriptsize{
    \begin{tabular}{|l|l|l|l|}
        \hline
        & \textbf{LOS-limited} & \textbf{Clock bias prior-limited} & \textbf{NLOS-limited}   \\ \hline
        \textbf{Scenario~1} & $\FFanalog$ & $\FFsum$ (Uniform) & Opt. Precoder in \eqref{eq_problem_worst_case_convex2}  \\ \hline
        \textbf{Scenario~2} & $\FFanalog$ & $\FFsum$ (Uniform) & $\FFanalog$ \\ \hline
    \end{tabular}}
    \label{tab_guide}
    \vspace{-0.1in}
\end{table}

\subsection{\revv{Performance of Time Sharing Optimization}}\label{sec_timeshare}
\revv{To validate Algorithm~\ref{alg_time_main}, we plot in Fig.~\ref{fig_digital_L_timeshare} the worst-case PEBs obtained by the power allocation scheme in \eqref{eq_problem_beampower} and the time sharing scheme defined by Algorithm~\ref{alg_time_main}. As expected, with sufficient number of symbols per beam, time sharing can attain the same performance as power allocation since the rounding error in $L_m = \nint{L \rho_m}$ diminishes as $L$ increases.}

\begin{figure}
	\centering
    %\vspace{-0.03in}
	\includegraphics[width=0.8\linewidth]{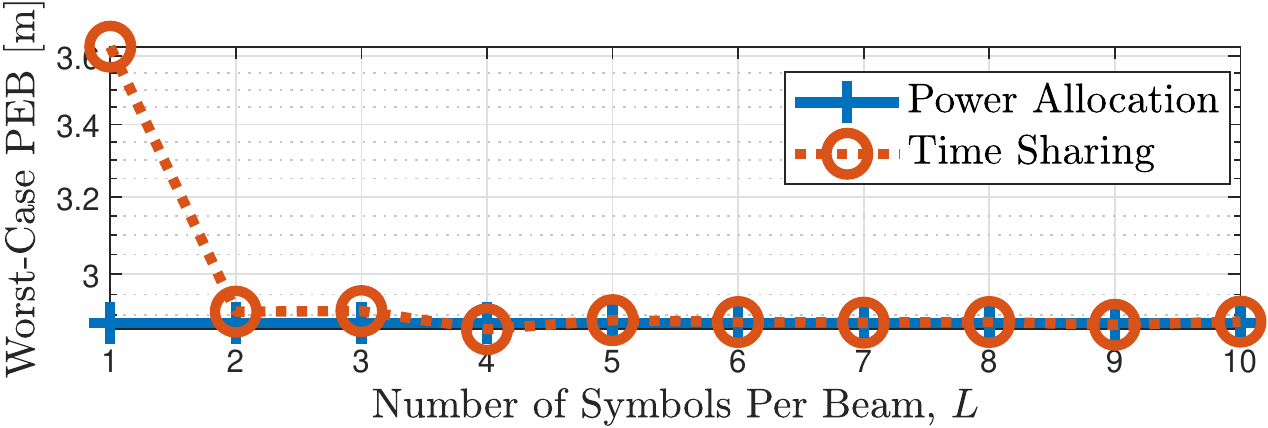}\vspace{-0.15in}
	\caption{\revv{Power allocation via \eqref{eq_problem_beampower} versus time sharing optimization via Algorithm~\ref{alg_time_main} with respect to $L$ at $\sigmaclkk = 10 \, \rm{m}$ for $\FFdigital$ in Scenario~1.}}
	\label{fig_digital_L_timeshare}
	\vspace{-0.25in}
\end{figure}

%%%%%%%%%%%%% Conclusion
\vspace{-0.05in}
\section{Conclusion}\vspace{-0.05in}
We have tackled the problem of spatial signal design for mmWave positioning under clock bias and location uncertainties. The proposed optimization-based robust design and low-complexity codebook-based heuristic designs are shown to provide significant improvements in positioning accuracy over traditional methods. Since the optimal precoder structure is valid for any scalar function of the FIM, the proposed approach can also be applied to other objectives than PEB, such as orientation error bound (OEB). Remarkably, the analog codebook can attain almost the same PEB performance as the digital one \rev{(assuming the usage of single precoding vector at the BS for each frame)}, which proves that high-accuracy positioning can be enabled by analog-only beamforming architectures. 
This result is also promising for beyond 5G systems where reconfigurable intelligent surfaces (RISs) with analog phase shifters are expected to be key enablers for positioning. \rev{As future research, we plan to investigate extensions to multi-user scenarios under multi-user or multi-BS interference constraints and frequency-dependent precoding structures (i.e., $\fb_m$ will depend on the subcarrier index $k$).}

% \cite{hu2018beyond}.

%%%% to start the bibliography from the next page
%\vfill\pagebreak

% \section{REFERENCES}
% \label{sec:refs}

\vspace{-0.1in}
\begin{appendices}
\vspace{-0.03in}
\section{\rev{FIM Elements as a Function of $\FF$ and $\XXb$}}\label{app_fim_der}%\vspace{-0.06in}
\revv{Assuming QPSK pilots, i.e., $\abs{\skl}^2 = 1$, } \rev{the FIM elements in \eqref{eq_jobs_ij} can be derived as follows:
\begin{align}
     \left[ \JJ \right]_{i,j} %\\ \nonumber 
    & = \frac{2}{\sigma^2}  
    %\sum_{k=0}^{K-1} \sum_{\ell=0}^{L-1} \sum_{m=0}^{M-1}
    \sum_{k,\ell,m}
    \realp{ \skl^{*} \fb_m^H   \frac{\partial \HH_k^H}{\partial \eta_i} \WW   \WW^H  \frac{\partial \HH_k}{\partial \eta_j} \fb_m \skl }   \notag
     \\ \nonumber
    &= \frac{2}{\sigma^2}  
    %\sum_{k=0}^{K-1} \sum_{\ell=0}^{L-1} \sum_{m=0}^{M-1}
    \sum_{k,\ell,m}
    \realp{ \abs{\skl}^2 \tracesmall{ \fb_m \fb_m^H   \frac{\partial \HH_k^H}{\partial \eta_i} \WW   \WW^H  \frac{\partial \HH_k}{\partial \eta_j}   } } 
     \\ 
    &= \frac{2}{\sigma^2}  \sum_{k=0}^{K-1}  \realp{  \tracesmall{  \XXb \frac{\partial \HH_k^H}{\partial \eta_i} \WW   \WW^H  \frac{\partial \HH_k}{\partial \eta_j}    } } ~, \label{eq_J_open}
\end{align}
where $\XXb \triangleq \revv{L} \FF \FF^H = \revv{L} \sum_{m=0}^{M-1} \fb_m \fb_m^H$. It follows from \eqref{eq_J_open} that the channel domain FIM $\JJ$ in \eqref{eq_jobs_ij} and the position domain FIM $\JJpos$ in \eqref{eq_jjpos} are functions of $\FF$ and $\XXb$.}

\vspace{-0.1in}
\section{Proof of Proposition~\ref{prop_opt_prec}}\label{app_prop1}
\vspace{-0.07in}
    First, we observe from the structure of $\JJ$ in \eqref{eq_jobs_ij} that the dependence of $\JJ$ on $\XXb$ is only through the following matrices:
    \begin{equation}
        \AAtxthetaih \XXb \AAtxthetaiconj, ~ \AAtxthetadtallih \XXb \AAtxthetaiconj, ~ \AAtxthetadtallih \XXb \AAtxthetadtalli^{*}~.
    \end{equation}
    Following the arguments in \cite[Appendix~C]{li2007range}, it can then be shown that the columns of the precoder $\FF$ corresponding to the optimal $\XXbstar = \FF \FFh$ of \eqref{eq_problem_peb2} belong to the subspace spanned by the columns of $\UU$, i.e., $\projnull{\UU} \FF = 0$, 
    %\begin{equation} %\label{eq_projnull_FF}
    %    \projnull{\UU} \FF = 0
    %\end{equation}
    where $\projnull{\UU} = \Imatrix - \projrange{\UU}$ and $\projrange{\UU} =  \UU ( \UU^H \UU )^{-1} \UU^H$. 
    %\begin{equation}
     %   \projrange{\UU} =  \UU ( \UU^H \UU )^{-1} \UU^H ~.
    %\end{equation}
    Then, the optimal precoder covariance matrix can be expressed as
    \begin{equation}
        \XXbstar = \projrange{\UU} \FF \FFh \projrange{\UU} = \UU \Lambdab \UU^H
    \end{equation}
    where $\Lambdab = ( \UU^H \UU )^{-1} \UU^H \FF \FFh \UU ( \UU^H \UU )^{-1}$.
    %\begin{equation} \label{eq_lambdab}
    %     \Lambdab = ( \UU^H \UU )^{-1} \UU^H \FF \FFh \UU ( \UU^H \UU )^{-1} ~.
    %\end{equation}
    \qed
    \vspace{-0.1in}
\end{appendices}

% List and number all bibliographical references at the end of the
% paper. The references can be numbered in alphabetic order or in
% order of appearance in the document. When referring to them in
% the text, type the corresponding reference number in square
% brackets as shown at the end of this sentence \cite{Signaling_OFDMA_TCOM}. An
% additional final page (the fifth page, in most cases) is
% allowed, but must contain only references to the prior
% literature.

% References should be produced using the bibtex program from suitable
% BiBTeX files (here: strings, refs, manuals). The IEEEbib.bst bibliography
% style file from IEEE produces unsorted bibliography list.
% -------------------------------------------------------------------------
%\bibliographystyle{IEEEbib}
\bibliographystyle{IEEEtran}
\bibliography{IEEEabrv,bib/5g_pos}

\end{document}